\def\Z{{\mathbb Z}}
\def\R{{\mathbb R}}  
\def\C{{\mathbb C}}
\def\X{{\mathcal{X}}}
\def\Y{{\mathcal{Y}}}
\def\Z{{\mathcal{Z}}}
\def\H{{\mathcal{H}}}
\def\L{{\mathcal{L}}}
\def\S{{\mathcal{S}}}
\def\M{{\mathcal{M}}}
\def\E{{\mathcal{E}}}
\def\a{{\alpha}}
\def\b{{\beta}}
\def\ga{{\gamma}}
\def\d{{\delta}}
\def\l{{\lambda}}
\def\m{{\mu}}
\def\r{{\rho}}
\def\s{{\sigma}}
\def\t{{\tau}}
\def\Tr{{\rm Tr}\,}
\newcommand{\bracket}[2]{\left \langle #1 \left | #2 \right\rangle\right.}
\newcommand{\B}{\hfill$\Box$\bigskip}
\newtheorem{theorem}    {Theorem}
\newtheorem{lemma}      [theorem]{Lemma}
\newtheorem{corollary}  [theorem]{Corollary}
\newtheorem{proposition}        [theorem]{Proposition}
\title{Information geometry of sandwiched R\'enyi $\a$-divergence}
\author{Kaito Takahashi and Akio Fujiwara%
\thanks{fujiwara@math.sci.osaka-u.ac.jp}\\
{Department of Mathematics, Osaka University}\\ 
{Toyonaka, Osaka 560-0043, Japan}\\
}
\date{}
\begin{document}
\maketitle

\begin{abstract}
Information geometrical structure $(g^{(D_\a)}, \nabla^{(D_\a)},\nabla^{(D_\a)*})$ induced from the sandwiched R\'enyi $\a$-divergence  
$D_\alpha(\rho\|\sigma):=\frac{1}{\alpha (\alpha-1)}\log\,\Tr \left(\s^{\frac{1-\a}{2\a}}\r\,\s^{\frac{1-\a}{2\a}}\right)^{\a}$ 
on a finite quantum state space $\S$ is studied. 
It is shown that the Riemannian metric $g^{(D_\a)}$ is monotone if and only if $\a\in(-\infty, -1]\cup [\frac{1}{2},\infty)$, 
and that the quantum statistical manifold $(\S, g^{(D_\a)}, \nabla^{(D_\a)},\nabla^{(D_\a)*})$ 
is dually flat if and only if $\a=1$.
\end{abstract}

\section{Introduction}

In his seminal paper, R\'enyi \cite{Renyi:1961} introduced a new class of information divergence now usually referred to as the {\em R\'enyi relative entropy} of order $\a$, where $\a$ is a positive number.
Recently, Wilde {\em et al.} \cite{WildeWY} and M\"uller-Lennert {\em et al.} \cite{MullerLennertDSFT} independently proposed an extension of the R\'enyi relative entropy to the quantum domain.
Let $\L(\H)$ and $\L_{\rm sa}(\H)$ denote the sets of linear operators and selfadjoint operators on a finite dimensional complex Hilbert space $\H$, 
and let $\L_+(\H)$ and $\L_{++}(\H)$ denote the subsets of $\L_{\rm sa}(\H)$ comprising positive operators and strictly positive operators. 
Given $\r, \s\in\L_+(\H)$ with $\r\neq 0$, let 
\begin{equation}\label{eqn:psi}
 \psi(\a):=\log\,\Tr \left(\s^{\frac{1-\a}{2\a}}\r\,\s^{\frac{1-\a}{2\a}}\right)^{\a}
\end{equation}
for $\a\in (0,\infty)$, with the convention that $\psi(\a):=\infty$ if $\a>1$ and $\ker \s \not\subset \ker \r$.
The first divided difference of $\psi$ at $\a=1$, i.e., 
\[ \frac{\psi(\a)-\psi(1)}{\a -1},\qquad (\a \neq 1)  \]
is called the {\em sandwiched R\'enyi relative entropy} \cite{WildeWY} or the {\em quantum R\'enyi divergence} \cite{MullerLennertDSFT}, and is denoted by $\tilde D_\alpha(\rho \|\sigma)$ in the present paper.
It is explicitly written as
\begin{equation}\label{eqn:Dtilde}
 \tilde D_\a(\r \|\s):=
 \frac{1}{\a -1}\log\,\Tr \left(\s^{\frac{1-\a}{2\a}}\r\,\s^{\frac{1-\a}{2\a}}\right)^{\a}
 -\frac{1}{\a -1} \log\,\Tr  \r.
\end{equation}
The sandwiched R\'enyi relative entropy is extended to $\a=1$ by continuity, to obtain the {\em Umegaki relative entropy}: 
\[
\tilde D_1(\r||\s)=\lim_{\a \to  1}  \tilde D_\a (\r||\s)=\Tr \!\left\{\r (\log \r -\log \s)\right\}
\]
with the convention that $\tilde D_1(\r||\s)=\infty$ if $\ker \s \not\subset \ker \r$.
The limiting cases $\a\downarrow 0$ and $\a\to \infty$ have also been studied in \cite{DattaL, AudenaertD} and \cite{MullerLennertDSFT}, respectively. 

The sandwiched R\'enyi relative entropy has several desirable properties: amongst others, it is monotone under completely positive trace preserving (CPTP) maps if $\a\ge \frac{1}{2}$ \cite{MullerLennertDSFT, WildeWY, Beigi,FrankL}. 
This property was successfully used in studying the strong converse properties of the channel capacity \cite{WildeWY, MosonyiO:2014} and quantum hypothesis testing problems \cite{MosonyiO:2013}. 

Now we confine our attention to the case when both $\r$ and $\s$ are faithful density operators that belong to the {\em quantum state space}:
\[ \S:=\S(\H):=\{\r\in \L_{++}(\H)\; | \; \Tr\r=1\}.  \]
In this case there is no difficulty in extending the quantities (\ref{eqn:psi}) and (\ref{eqn:Dtilde}) to the region $\a<0$. 
In order to motivate our study, let us assume for now that $\r$ and $\s$ commute.
Then the quantity (\ref{eqn:psi}) is reduced to $\psi(\a)=\log\Tr \r^\a\s^{1-\a}$: 
this is known as the potential function for the $\nabla^{(e)}$-geodesic connecting $\r$ and $\s$ in classical information geometry \cite{AmariNagaoka}, 
and is meaningful for all $\a\in\R$. 
On the other hand, 
\[ \tilde D_\a(\r \|\s)=\frac{1}{\a-1}\log\Tr \r^\a\s^{1-\a} \]
for $\a<0$ does not seem to be a reasonable measure of information \cite{Renyi:1961}, since it takes negative values. 
Meanwhile, there are also other types of divergence functions that have been found to be useful in classical information theory and statistics.
For example,  Csisz\'ar \cite{Csiszar} introduced a class of information divergence now usually referred to as the {\em Csisz\'ar $f$-divergence}, a version of which is written as
\[
 D^f(\r\|\s):=\Tr\{\s f(\r\s^{-1})\},
\]
where $f$ is a real-valued, strictly convex, smooth function on the set $\R_{++}$ of positive real numbers satisfying $f(1)=0$ and $f''(1)=1$. 
It is easily seen from Jensen's inequality that $D^f(\r\|\s)\ge 0$, and $D^f(\r\|\s)=0$ if and only if $\r=\s$. 
Now let us consider a family of functions
\[
 f^{[\a]}(t):=\frac{1}{\a(1-\a)} \left(1-t^{\a}\right)
\]
having a one-dimensional parameter $\a$ with $\a\notin\{0, 1\}$.
This family is known to play an important role in classical information geometry. 
For example, the corresponding $f^{[\a]}$-divergences are the well-known ``alpha-divergences'' 
\begin{equation}\label{eqn:Csiszar-f-divergence}
 D^{f^{[\a]}}(\r\|\s)=\frac{1}{\a(1-\a)}\left(1-\Tr \r^\a\s^{1-\a} \right),
\end{equation}
although the parametrization ``alpha'' is different from the standard one \cite{AmariNagaoka}. 
Now from the Taylor expansion of $\log(1+x)$, 
the divergence (\ref{eqn:Csiszar-f-divergence}) 
is related to the potential function $\psi(\a)=\log\Tr \r^\a\s^{1-\a}$ as 
\begin{equation}\label{eqn:divergenceExpansion}
 \psi(\a)=\a(\a-1) D^{f^{[\a]}}(\r\|\s)+O\left( D^{f^{[\a]}}(\r\|\s)^2 \right).
\end{equation}
In other words, the $f^{[\a]}$-divergence $D^{f^{[\a]}}(\r\|\s)$, provided it is small enough, is not very different from 
$\frac{1}{\a} \tilde D_\a(\r \|\s)$.
Note that the quantity $\frac{1}{\a} \tilde D_\a(\r \|\s)$ is nonnegative even for $\a<0$.

Motivated by the above consideration, we aim at investigating the ``rescaled'' sandwiched R\'enyi relative entropy:
\begin{equation}\label{eqn:Dalpha}
D_\a(\r\|\s):=
\frac{1}{\a (\a-1)}
\log\,\Tr \left(\s^{\frac{1-\a}{2\a}}\r\,\s^{\frac{1-\a}{2\a}}\right)^{\a}
\end{equation}
on the quantum state space $\S$ for $\a \in \R \backslash \{0,1\}$,
which shall be referred to as the {\em sandwiched R\'enyi $\a$-divergence} in the present paper.
It is continuously extended to $\a=1$ as 
\[ D_1(\r \|\s):=\lim_{\a \to 1}D_\a (\r \|\s)=\Tr \!\left\{\r (\log \r -\log \s)\right\}. \]
However, we note that, unless $\r$ and $\s$ commute, (\ref{eqn:Dalpha}) cannot be extended to $\a=0$ because $\lim_{\a \to 0}D_\a (\r \|\s)$ does not always exist 
(cf., Appendix \ref{app:Dalpha0}). 
To put it differently, the sandwiched R\'enyi $0$-divergence $D_0(\rho\|\sigma)$ is excluded on the quantum state space $\S$.
This fact makes a striking contrast to classical information geometry. 

A basic property of the sandwiched R\'enyi $\a$-divergence (\ref{eqn:Dalpha}) is the following: 
\begin{equation}\label{eqn:quasiContrast}
 D_\a(\r \| \s) \ge 0, \quad \mbox{and}\quad
 D_\a(\r\|\s)=0 \quad\mbox{if and only if}\quad \r=\s
\end{equation}
for $\r,\s\in\S$ and $\a\in\R\backslash\{0\}$ (cf., Appendix \ref{app:Dalpha}). 
This fact enables us to introduce an information geometric structure on the quantum state space $\S$ through Eguchi's method \cite{Eguchi}. 
Firstly, the Riemannian metric $g^{(D_\a)}$ is defined by
\begin{equation}\label{eqn:gD}
 g_\r^{(D_\a)}(X,Y):= \left. \phantom{\prod\!\!\!\!\!\!\!\!\!} D_\a((XY)_\r\|\s)\right|_{\s=\r}
 := \left. \phantom{\prod\!\!\!\!\!\!\!\!\!} XYD_\a(\r\|\s)\right|_{\s=\r}. 
\end{equation}
In the last side, the vector fields $X$ and $Y$ are regarded as acting only on $\r$. 
Secondly, a pair of affine connections $\nabla^{(D_\a)}$ and $\nabla^{(D_\a)*}$ are defined by
\begin{eqnarray}
 g_\r^{(D_\a)}(\nabla^{(D_\a)}_XY,Z) \!\! &:=&\!\! \left. \phantom{\prod\!\!\!\!\!\!\!\!\!} -D_\a((XY)_\r\|(Z)_\s)\right|_{\s=\r},
 \label{eqn:nablaD}\\
 g_\r^{(D_\a)}(\nabla^{(D_\a)*}_XY,Z) \!\! &:=&\!\! \left. \phantom{\prod\!\!\!\!\!\!\!\!\!} -D_\a((Z)_\r\|(XY)_\s)\right|_{\s=\r}.
 \label{eqn:nablaD*}
\end{eqnarray}
The right-hand sides of (\ref{eqn:nablaD}) and (\ref{eqn:nablaD*}) are understood in an analogous way to (\ref{eqn:gD}). 
Since 
\[
 \left. \phantom{\prod\!\!\!\!\!\!\!\!\!} D_\a((X)_\r\|\s)\right|_{\s=\r}=0
\]
for any vector field $X$, which is a consequence of (\ref{eqn:quasiContrast}), 
the metric $g_\r^{(D_\a)}$ is also written as
\[ g_\r^{(D_\a)}(X,Y)=\!\! \left. \phantom{\prod\!\!\!\!\!\!\!\!\!} -D_\a((X)_\r\|(Y)_\s)\right|_{\s=\r}.  \]
It is now straightforward to verify the duality: 
\begin{equation}\label{eqn:duality}
 X g^{(D_\a)}(Y,Z)=g^{(D_\a)}(\nabla^{(D_\a)}_XY,Z)+g^{(D_\a)}(Y,\nabla^{(D_\a)*}_XZ).
\end{equation}
This property plays an essential role in information geometry \cite{AmariNagaoka}.

A Riemannian metric $g$ on a quantum state space is called {\em monotone} \cite{Petz} if it satisfies
\begin{equation}\label{eqn:monotoneMetric}
 g_\r(X,X)\ge g_{\ga(\r)}(\ga_*X, \ga_*X)
\end{equation}
for all states $\r\in\S$, tangent vectors $X\in T_\r\S$, and CPTP maps $\ga:\L(\H)\to\L(\H')$, with $\ga_*$ denoting the differential of $\ga$. 
The monotonicity (\ref{eqn:monotoneMetric}) implies that the distinguishability of two nearby states, measured by the metric $g$, cannot be enhanced by any physical process $\ga$.
This is a fundamental requirement for information processing, and hence, characterizing monotone metrics is important in quantum information theory. 

The main result of the present paper is the following

\begin{theorem}\label{thm:main}
The induced Riemannian metric $g^{(D_\alpha)}$ is monotone under CPTP maps if and only if
$\a \in(-\infty, -1]\cup [\frac{1}{2}, \infty)$. 
\end{theorem}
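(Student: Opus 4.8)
The plan is to make the metric $g^{(D_\a)}$ completely explicit, recognise it as a member of Petz's family of monotone-candidate metrics, and then reduce the whole statement to operator monotonicity of one scalar function. First I would evaluate $g^{(D_\a)}$ from the ``mixed'' form $g^{(D_\a)}_\r(X,Y)=-D_\a((X)_\r\|(Y)_\s)|_{\s=\r}$; since $D_\a$ vanishes to first order at $\s=\r$, this equals $-\frac{1}{\a(\a-1)}\,\partial_t\partial_s\,\Tr\big((\r+sY)^p(\r+tX)(\r+sY)^p\big)^{\a}\big|_{t=s=0}$ with $p=\frac{1-\a}{2\a}$. The derivative in $X$ is linear; the derivative in $Y$ is handled by the Fréchet derivatives of $\s\mapsto\s^{p}$ and of $A\mapsto A^{\a}$; and — working in an eigenbasis $\{\lambda_j\}$ of $\r$ and using $\r^{p}\r\,\r^{p}=\r^{1/\a}$ — a divided-difference identity collapses the resulting double sum, so that $g^{(D_\a)}_\r(X,X)=\sum_{j,k}|X_{jk}|^2/m_\a(\lambda_j,\lambda_k)$ with
\[
 m_\a(x,y):=\frac{\b-1}{\b}\cdot\frac{x^{\b}-y^{\b}}{x^{\b-1}-y^{\b-1}},\qquad \b:=1/\a .
\]
Equivalently $m_\a(x,y)=y\,f_\b(x/y)$ with $f_\b(t):=\frac{(\b-1)(t^{\b}-1)}{\b(t^{\b-1}-1)}$; one checks $f_\b>0$, $f_\b(1)=1$, $f_\b(t)=tf_\b(1/t)$, and that the diagonal of $g^{(D_\a)}$ is the Fisher metric. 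As sanity checks $\a=\tfrac12$ gives $f_\b(t)=\tfrac{1+t}{2}$, the Bures/SLD metric, and $\a=1$ gives the logarithmic mean, the BKM metric.

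By the classification of monotone metrics (Petz \cite{Petz}), a Riemannian metric of the form $\sum_{j,k}|X_{jk}|^2/\big(\lambda_k f(\lambda_j/\lambda_k)\big)$ with $f(1)=1$ and $f(t)=tf(1/t)$ is monotone under CPTP maps if and only if $f$ is operator monotone on $(0,\infty)$. Thus Theorem \ref{thm:main} is equivalent to: $f_\b$ is operator monotone exactly when $\b\in[-1,2]$ (equivalently $\a\in(-\infty,-1]\cup[\tfrac12,\infty)$; the value $\a=1$ is the continuous extension, with $f_1$ the logarithmic mean). A key algebraic remark is the identity $t/f_\b(t)=f_{1-\b}(t)$: the dual of a standard monotone metric is again one, $[-1,2]$ is precisely the interval invariant under $\b\mapsto1-\b$, and $f_\b$ is operator monotone iff $f_{1-\b}$ is.

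For necessity I would use that operator monotone functions on $(0,\infty)$ are operator concave, hence concave, so $f_\b''(1)\le0$; applying the same to the dual $f_{1-\b}$ gives $f_{1-\b}''(1)\le0$. A short Taylor expansion gives $f_\b(t)=1+\tfrac12(t-1)+\tfrac{\b-2}{12}(t-1)^2+O((t-1)^3)$, so $f_\b''(1)=\tfrac{\b-2}{6}$ and $f_{1-\b}''(1)=\tfrac{-\b-1}{6}$; the two inequalities force $\b\le2$ and $\b\ge-1$.

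For sufficiency I would bootstrap from data processing. On $\S$ one has $D_\a=\tfrac1\a\tilde D_\a$, and $\tilde D_\a$ is monotone under CPTP maps for $\a\ge\tfrac12$; since $\tfrac1\a>0$ there, $D_\a$ is monotone, hence so is its Eguchi metric $g^{(D_\a)}$, and by Petz's theorem $f_\b$ is operator monotone for all $\b=1/\a\in(0,2]$. The identity $t/f_\b(t)=f_{1-\b}(t)$ then propagates operator monotonicity to $\b\in[-1,1)$, so $f_\b$ is operator monotone for all $\b\in[-1,2]$, which by Petz's theorem gives monotonicity of $g^{(D_\a)}$ whenever $1/\a\in[-1,2]$. (Alternatively, one can argue directly that $f_\b$ extends to a Pick function for $\b\in[-1,2]$: it is holomorphic on $\C\setminus(-\infty,0]$ since for $|\b-1|\le2$ the only zero of $z^{\b-1}-1$ in the slit plane is the removable point $z=1$, and a boundary computation shows $\operatorname{Im}f_\b(-t+i0)\ge0$ for all $t>0$ precisely when $\b\in[-1,2]$.) The step I expect to be the real obstacle is the very first one: carrying the two-variable differentiation through the nested operator powers and, above all, collapsing the double sum of divided differences into the closed form $m_\a$; once that closed form is in hand, everything else follows from Petz's theorem, the duality identity, and an elementary Taylor expansion.
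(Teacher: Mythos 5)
Your proposal is correct and follows the paper's overall skeleton---compute the metric, identify it as a Petz-form metric with $f_\b(t)=\frac{\b-1}{\b}\frac{t^\b-1}{t^{\b-1}-1}$, $\b=1/\a$ (your closed form $m_\a$ agrees exactly with the paper's Lemma on $f^{(D_\a)}$), and reduce the theorem via Petz's classification to the statement that $f_\b$ is operator monotone iff $\b\in[-1,2]$, using the same duality $t/f_\b(t)=f_{1-\b}(t)$---but your proof of that key characterization is genuinely different in both directions. For necessity the paper plugs $t=e$ into the bounds $\frac{2t}{1+t}\le f\le\frac{1+t}{2}$ and runs a calculus argument in $\b$ showing these fail outside $[-1,2]$; you instead use that a positive operator monotone function on $(0,\infty)$ is concave, compute $f_\b''(1)=\frac{\b-2}{6}$ (your Taylor coefficient $\frac{\b-2}{12}$ checks out) and apply the same to the dual, which is shorter and purely local. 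For sufficiency the paper gives a self-contained operator-theoretic proof (the cases $\b\in[\frac12,1)$, $\b=1$, $\b\in(1,2]$ via divided differences of operator convex/concave powers and composition, then duality), whereas you bootstrap from the known data-processing inequality for $\tilde D_\a$ with $\a\ge\frac12$: a monotone divergence yields a monotone Eguchi metric through the expansion $D_\a(\r_t\|\r)=\frac{t^2}{2}g^{(D_\a)}_\r(X,X)+o(t^2)$, the ``only if'' half of Petz's theorem then gives operator monotonicity of $f_\b$ for $\b\in(0,2]$, and duality extends it to $[-1,1)$. This is logically sound and not circular (the paper's Corollary concerns $\a\in(-1,0)\cup(0,\frac12)$ and only needs your necessity direction), but it imports the deep Frank--Lieb/Beigi DPI theorem, while the paper's route is elementary and keeps Theorem 1 independent of that literature---which matters for the paper's concluding suggestion that metric monotonicity hints at DPI for $\a\le-1$. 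Two minor caveats: your parenthetical Pick-function alternative is only a sketch (holomorphy on the slit plane holds for $|\b-1|\le 2$, i.e.\ $\b\in[-1,3]$, so the boundary computation of the imaginary part is doing essential work), and the metric computation you flag as the main labor is exactly what the paper carries out via the integral representation of $Df(A)[B]$ for $f(x)=x^{\a-1}$, landing on your formula, so there is no gap there.
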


As a by-product, we arrive at the following corollary, the latter part of which was first observed by numerical evaluation \cite{MullerLennertDSFT}.

\begin{corollary}
The sandwiched R\'enyi $\a$-divergence $D_\alpha  (\rho||\sigma)$ is not monotone under CPTP maps if $\a\in(-1,0)\cup (0,\frac{1}{2})$. 
Consequently, the original sandwiched R\'enyi relative entropy $\tilde D_\alpha  (\rho||\sigma)$
is not monotone if $\a\in(0,\frac{1}{2})$.
\end{corollary}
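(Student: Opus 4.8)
The plan is to deduce the corollary directly from Theorem~\ref{thm:main}, exploiting the standard principle that a CPTP-monotone divergence induces, through Eguchi's construction, a CPTP-monotone Riemannian metric, and then contraposing; no explicit counterexample needs to be built by hand.

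The first step is a contradiction argument for $D_\a$. Suppose $D_\a$ were monotone, i.e.\ $D_\a(\ga(\r)\|\ga(\s))\le D_\a(\r\|\s)$ for every CPTP map $\ga$ and all $\r,\s$. Fix $\r\in\S$, a tangent vector $X\in T_\r\S$, and a CPTP map $\ga$. Since $\ga$ is linear, $\ga(\r+tX)=\ga(\r)+t\,\ga_*X$, so applying the hypothesis to the pair $(\r+tX,\,\r)$ gives $D_\a\bigl(\ga(\r)+t\,\ga_*X\,\big\|\,\ga(\r)\bigr)\le D_\a(\r+tX\|\r)$ for all small $t$. By (\ref{eqn:quasiContrast}) one has $D_\a(\r\|\r)=0$ and $\left.D_\a((X)_\r\|\s)\right|_{\s=\r}=0$, so a Taylor expansion in $t$ together with the definition (\ref{eqn:gD}) yields $D_\a(\r+tX\|\r)=\tfrac12\,g^{(D_\a)}_\r(X,X)\,t^2+o(t^2)$, and likewise for the left-hand side. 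Dividing by $t^2$ and letting $t\to 0$ produces exactly the monotone-metric inequality (\ref{eqn:monotoneMetric}) for $g^{(D_\a)}$. Hence monotonicity of $D_\a$ would force monotonicity of $g^{(D_\a)}$; by Theorem~\ref{thm:main} the latter fails precisely when $\a\in(-1,0)\cup(0,\tfrac12)$, which proves the first assertion. (A witness $(\r_0,X_0,\ga)$ for metric non-monotonicity yields, for $t$ small enough, a pair of states at which $D_\a$ strictly increases; one may take $\ga$ so that these states stay faithful, or invoke continuity of $D_\a$ up to $\partial\S$.)

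For the second assertion, observe that on $\S$ one has $\Tr\r=1$, so the last term in (\ref{eqn:Dtilde}) vanishes and $\tilde D_\a(\r\|\s)=\a\,D_\a(\r\|\s)$ for all $\r,\s\in\S$. When $\a\in(0,\tfrac12)$ the scalar $\a$ is a positive constant, hence it preserves the direction of any strict inequality $D_\a(\ga(\r)\|\ga(\s))>D_\a(\r\|\s)$; therefore the non-monotonicity of $D_\a$ just established already forces $\tilde D_\a$ to violate monotonicity on $\S$, and \emph{a fortiori} on the full set of density operators. (For $\a\in(-1,0)$ the scalar $\a$ is negative and merely reverses the inequality, so this route says nothing about $\tilde D_\a$ there, consistently with the restriction $\a\in(0,\tfrac12)$ in the statement.)

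I do not expect a genuine obstacle here, since the corollary is essentially a soft consequence of Theorem~\ref{thm:main}. The only points that need a little care are the legitimacy of the $t\to0$ passage---routine, because $D_\a$ is smooth on $\S\times\S$ and its first derivative in the first argument vanishes on the diagonal---and the bookkeeping that a channel may carry a faithful state off the interior of $\S$, which is handled either by choosing the witnessing channel appropriately or by a continuity/density argument. Were one to demand a self-contained proof bypassing Theorem~\ref{thm:main}, the real work would instead be to construct, for every $\a$ in the indicated ranges, an explicit channel and a pair of states on which $D_\a$ (equivalently $\tilde D_\a$) strictly increases---precisely the task that the metric computation behind Theorem~\ref{thm:main} renders superfluous.
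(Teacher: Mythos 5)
Your argument is correct and coincides with the paper's own (implicit) reasoning: the paper presents the corollary as an immediate by-product of Theorem~\ref{thm:main}, relying on exactly the second-order Taylor/Eguchi argument that monotonicity of $D_\a$ would force monotonicity of $g^{(D_\a)}$, plus the relation $\tilde D_\a=\a D_\a$ on $\S$ with $\a>0$. Your fleshed-out version, including the remarks on the $t\to0$ passage and on keeping the witnessing states faithful, is precisely the intended proof.
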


We also study the dualistic structure $(g^{(D_\a)},\nabla^{(D_\a)},\nabla^{(D_\a)*})$ on the quantum state space $\S$, and obtain the following 

\begin{theorem}\label{thm:flat}
The quantum statistical manifold $(\S, g^{(D_\a)},\nabla^{(D_\a)},\nabla^{(D_\a)*})$ is dually flat if and only if $\a=1$. 
\end{theorem}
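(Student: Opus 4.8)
The plan is to reduce dual flatness to a differential condition on the metric and connections, and then exploit the explicit form of $D_\a$ to show this condition forces $\a=1$. Recall that a dualistic structure $(g,\nabla,\nabla^*)$ is \emph{dually flat} precisely when $\nabla$ (equivalently $\nabla^*$, by the duality \eqref{eqn:duality}) is flat, i.e.\ has vanishing curvature and torsion. For the case $\a=1$ the divergence $D_1$ is the Umegaki relative entropy, whose induced structure is the classical/quantum exponential--mixture pair $(g^{(\mathrm{SLD?})},\nabla^{(e)},\nabla^{(m)})$ — in fact the well-known dually flat structure of Nagaoka and Amari \cite{AmariNagaoka}. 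So the ``if'' direction is to cite this: one checks that \eqref{eqn:gD}--\eqref{eqn:nablaD*} at $\a=1$ reproduce the Bogoliubov (Kubo--Mori) metric together with the $\pm 1$-connections, which are flat with respective affine coordinates $\log\r$ (mod trace) and $\r$ itself.

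For the ``only if'' direction I would argue by computing a curvature obstruction. First I would write down, for general $\a$, the Christoffel symbols of $\nabla^{(D_\a)}$ in a convenient coordinate system on $\S$ — e.g.\ the affine coordinates induced by the embedding $\r\mapsto\r$, so that the $m$-connection is the flat reference and $\nabla^{(D_\a)}$ differs from it by a $(1,2)$-tensor (the difference tensor) that is computable from the third derivatives of $D_\a$. Then dual flatness of $\nabla^{(D_\a)}$ is equivalent to the existence of coordinates in which these Christoffel symbols vanish, which — since we already know they are symmetric (torsion-free, as Eguchi-induced connections always are) — is equivalent to the vanishing of the Riemann curvature tensor $R^{(D_\a)}$. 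The key computation is therefore to evaluate $R^{(D_\a)}$ (or, more cheaply, a single scalar component such as a sectional curvature on a well-chosen $2$-dimensional section) as a function of $\a$ and show it is nonzero for $\a\neq 1$.

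To make the curvature computation tractable I would restrict to a small, explicit model: take $\H=\C^2$, or even a $2$-real-dimensional submanifold through a fixed point $\r_0$, and choose $\r_0$ and tangent directions that do \emph{not} commute (the commuting case is classical and always flat, so noncommutativity must be where the obstruction lives). Expanding $D_\a(\r_0 e^{X}\|\r_0 e^{Y})$ — or the analogous two-parameter family — to third order in the parameters, using the integral representation of derivatives of operator functions (Fréchet derivatives of $A\mapsto A^\a$ and of $\log$), yields the metric and the cubic tensor $T^{(D_\a)}_{ijk}:=\partial_i\partial_j\partial_k D_\a|_{\s=\r}$; the difference tensor and hence the curvature are then rational/analytic expressions in $\a$ whose vanishing I expect to pin down $\a=1$ (one should double-check whether any spurious extra root, e.g.\ from the monotonicity boundary $\a=\tfrac12$ or $\a=-1$, appears — I do not expect it to, since dual flatness is far more rigid than monotonicity).

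The main obstacle is the curvature computation itself: the sandwiched form $\s^{(1-\a)/2\a}\r\,\s^{(1-\a)/2\a}$ raised to the power $\a$ makes the third-order expansion of $D_\a$ genuinely messy, with nested Fréchet derivatives of two different operator functions. The way to control this is (i) to work at a maximally symmetric point so that many terms drop, (ii) to use the $e$-representation/$m$-representation bookkeeping from \cite{AmariNagaoka} and \cite{Petz} so that the $\a$-dependence is isolated in scalar factors multiplying universal tensors, and (iii) possibly to compute only the \emph{dual} connection $\nabla^{(D_\a)*}$, which by \eqref{eqn:nablaD*} involves derivatives acting on the \emph{second} slot $\s$ of $D_\a$ and may be algebraically simpler. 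If a clean closed form for even one curvature component as a function of $\a$ can be extracted, the theorem follows immediately by solving that equation; establishing that closed form is the crux.
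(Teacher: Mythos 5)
There is a genuine gap, and it is located exactly where you flag the ``crux'': the decisive curvature computation is never carried out (``whose vanishing I expect to pin down $\a=1$'' is a conjecture, not a proof), and the heuristic you use to set it up is false. You assert that ``the commuting case is classical and always flat, so noncommutativity must be where the obstruction lives.'' In classical information geometry the full probability simplex equipped with the Fisher metric and the pair of $\pm\alpha$-connections is \emph{not} dually flat for general $\alpha$: the $\alpha$-connection on the simplex has constant curvature $(1-\alpha^{2})/4$, so dual flatness holds only for $\alpha=\pm 1$ (the e/m pair) \cite{AmariNagaoka}. Consequently the obstruction you are looking for is already visible on a commutative submanifold, and your plan to work at a noncommuting point with nested Fr\'echet-derivative expansions of the sandwiched power function is both unnecessary and, as written, incomplete --- if the particular sectional curvature you compute happened to vanish at some extra $\a$, your argument would not exclude it, and you give no control over this.

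This is precisely the route the paper takes for the ``only if'' direction: restrict to a submanifold $\M\subset\S$ of mutually commuting densities; by (\ref{eqn:divergenceExpansion}) the restricted divergence agrees with the classical alpha-divergence to the relevant order, so $\left.g^{(D_\a)}\right|_\M$ is the Fisher metric and $\left.\nabla^{(D_\a)}\right|_\M$, $\left.\nabla^{(D_\a)*}\right|_\M$ are the classical $(2\a-1)$- and $(1-2\a)$-connections; dual flatness of $\M$ then forces $2\a-1=\pm1$, i.e.\ $\a\in\{0,1\}$, and $\a=0$ is ruled out because $D_\a$ does not extend to $\a=0$ on $\S$ (Appendix \ref{app:Dalpha0}) --- note this excluded value plays the role of the ``spurious extra root'' you hoped would not appear. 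Your ``if'' direction (Umegaki relative entropy gives the Bogoliubov metric with the flat $\pm1$-connections) matches the paper and is fine. To salvage your approach you would either have to actually produce a nonzero curvature component for every $\a\neq 1$ in the quantum setting, or switch to the classical restriction argument; in the latter case you should also justify the passage from dual flatness of $\S$ to dual flatness of the restricted structure on $\M$ (e.g.\ via invariance of $D_\a$ under conjugation by diagonal unitaries, which makes $\M$ autoparallel for both connections), a point your proposal does not touch.
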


The paper is organized  as follows.
In Section 2, we compute the metric $g^{(D_\alpha)}$, and prove Theorem \ref{thm:main}. 
In Section 3, we investigate the dualistic structure $(g^{(D_\a)},\nabla^{(D_\a)}, \nabla^{(D_\a)*})$ on a quantum state space $\S$, and prove Theorem \ref{thm:flat}.
Section 4 is devoted to concluding remarks. 
Some additional topics are discussed in Appendices A-D. 

\section{Proof of Theorem \ref{thm:main}}

In quantum information geometry, it is customary to use a pair of operator representations of tangent vectors 
called the m-representation and the e-representation \cite{AmariNagaoka}. 
The {\em m-representation} $X^{(m)}$ of a tangent vector $X\in T_\r\S$ at $\r\in\S$ is simply defined by 
\[ X^{(m)}:=X\r. \] 
In order to introduce an e-representation, on the other hand, we need to specify a continuous monotone function $f:\R_{++}\to\R_{++}$ satisfying 
$f(1)=1$ and $f(t)=tf(\frac{1}{t})$. 
Once such a function $f$ is given, we define the corresponding {\em e-representation} $X_{f}^{(e)}$ of $X\in T_\r\S$ by 
\[ X_{f}^{(e)}:=f(\Delta_\r)^{-1}\left\{(X\r)\r^{-1}\right\}, \]
where 
$\Delta_\r$ is the {\em modular operator} associated with $\r\in\S$ defined by
\[ \Delta_\r:\L(\H)\longrightarrow \L(\H):  A\longmapsto \r A \r^{-1}.  \]
A Riemannian metric $g$ is then given by the pairing
\[  
g_\r(X,Y)
=\Tr\!\left\{ X_{f}^{(e)} Y^{(m)} \right\} 
\]
between e- and m-representations.
According to Petz's theorem \cite{Petz}, the metric $g$ represented in this form is monotone if and only if the function $f$ is operator monotone.
Thus, to prove Theorem \ref{thm:main}, we first derive the defining function $f=f^{(D_\a)}$ from each $D_\a$ (Lemma \ref{lem:fAlpha}), and then verify that the function $f^{(D_\a)}$ is operator monotone if and only if $\a \in(-\infty, -1]\cup [\frac{1}{2}, \infty)$ (Lemma \ref{lem:fBeta}). 

\subsection{Computation of metric}

We first note that, for a power function $f(x)=x^\l$ with $x>0$ and $\l\in\R$, the directional derivative
\[
 Df(A)[B]:=\lim_{t \to 0} \frac{f(A+tB)-f(A)}{t}, \qquad
 (A\in\L_{++}(\H),\;B\in\L_{\rm sa}(\H))
\]
is given by
\begin{equation}\label{eqn:GateauxFormula1}
 Df(A)[B]=\l \int_0^1 dt \int_0^\infty ds\,  A^{\l t}(sI+A)^{-1}B(sI+A)^{-1} A^{\l (1-t)}, 
\end{equation}
where $I$ is the identity (cf., (\ref{eqn:GateauxEx3}) in Appendix \ref{app:Gateaux}).   
The following formula
\begin{equation}\label{eqn:GateauxFormula2}
 D(\Tr A^\l)[B]=\Tr \!\left\{(\l A^{\l-1})B\right\}
\end{equation}
for differentiation under the trace operation 
also follows from (\ref{eqn:GateauxFormula1}). 

\begin{lemma}\label{lem:fAlpha}
For each $\a\in\R\backslash\{0,1\}$, the metric $g^{(D_\a)}$ is represented in the form
\begin{equation}\label{eqn:fAlpha}
 g_\r^{(D_\a)}(X,Y)=\Tr\!\left\{ X_{f^{(D_\a)}}^{(e)} Y^{(m)} \right\},
\end{equation}
where
\[ 
 f^{(D_\a)}(t):=(\a -1) \frac{t^{\frac{1}{\a}}-1}{\,1-t^{\frac{1-\a}{\a}}\,}
\] 
with the convention that $f^{(D_\a)}(1):=\lim_{t\to 1}f^{(D_\a)}(t)=1$. 
\end{lemma}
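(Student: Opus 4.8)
The plan is to compute the Hessian of $D_\a(\r\|\s)$ at the diagonal $\s=\r$ directly, using the Gâteaux-derivative formulas (\ref{eqn:GateauxFormula1})--(\ref{eqn:GateauxFormula2}), and then to recognize the resulting bilinear form as the Petz pairing $\Tr\{X^{(e)}_{f}Y^{(m)}\}$ for the claimed function $f^{(D_\a)}$. First I would fix a state $\r$ and a tangent vector $Y\in T_\r\S$ (so $\Tr Y\r=0$), and parametrize $\s$ along a curve so that $X$ acts on $\s$; writing $M_\a(\r,\s):=\s^{\frac{1-\a}{2\a}}\r\,\s^{\frac{1-\a}{2\a}}$ and $\Phi(\r,\s):=\Tr M_\a(\r,\s)^\a$, we have $D_\a=\frac{1}{\a(\a-1)}\log\Phi$. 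Because $\left.XD_\a(\r\|\s)\right|_{\s=\r}=0$ and $\Phi=1$ on the diagonal, the Hessian simplifies: the second $Y$-derivative of $\frac{1}{\a(\a-1)}\log\Phi$ at $\s=\r$ contributes only through $\frac{1}{\a(\a-1)}\cdot\frac{1}{\Phi}\,XY\Phi$ plus a term $-\frac{1}{\a(\a-1)\Phi^2}(X\Phi)(Y\Phi)$, and on the diagonal one checks both $X\Phi$ and $Y\Phi$ vanish when evaluated against tangent vectors (trace-zero directions), so only $\frac{1}{\a(\a-1)}\,XY\Phi\big|_{\s=\r}$ survives.

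Next I would compute $XY\Phi$ at $\s=\r$. Differentiating $\Phi=\Tr M_\a^\a$ in $Y$ (which moves $\r$) gives, via (\ref{eqn:GateauxFormula2}), a factor $\a M_\a^{\a-1}$ paired with $\partial_Y M_\a = \s^{\frac{1-\a}{2\a}}(Y\r)\s^{\frac{1-\a}{2\a}}$; then differentiating in $X$ (which moves $\s$) hits the $\s$-powers sandwiching $Y\r$ as well as the $M_\a^{\a-1}$ factor. Evaluating at $\s=\r$, where $M_\a=\r$ and the modular operator $\Delta_\r$ governs how the fractional powers of $\s$ act by commutation, the various pieces assemble — after using (\ref{eqn:GateauxFormula1}) to express $D(x^\l)$ in integral form and collapsing the $s$- and $t$-integrals — into an expression of the form $\Tr\{(X\r)\,h(\Delta_\r)[(Y\r)\r^{-1}]\}$ for an explicit scalar function $h$. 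Comparing with $\Tr\{X^{(e)}_fY^{(m)}\}=\Tr\{f(\Delta_\r)^{-1}[(X\r)\r^{-1}]\,Y\r\}$ and using selfadjointness of $f(\Delta_\r)$ together with the identity $\Tr\{A\,k(\Delta_\r)[B]\}=\Tr\{k(\Delta_\r^{-1})[A]\,B\}$, I would read off $f^{(D_\a)}=1/h$ and match it to $(\a-1)\frac{t^{1/\a}-1}{1-t^{(1-\a)/\a}}$; the normalization $f^{(D_\a)}(1)=1$ and the symmetry $f(t)=tf(1/t)$ should come out as consistency checks.

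The main obstacle is the bookkeeping in the mixed second derivative: the term $M_\a^\a$ is a power of a product of three noncommuting factors, two of which ($\s$-powers) are themselves being varied, so a naive product rule produces several integral terms that look different but must combine. The cleanest route is probably to diagonalize $\r$ and work in its eigenbasis, replacing $\Delta_\r$ by the scalar ratios $\l_i/\l_j$ of eigenvalues and the operator integrals by ordinary integrals over $s,t$; each matrix entry then reduces to a one-variable calculus identity of the type $\a\int_0^1 u^{-\a t}\,du + (\text{boundary terms from the }\s\text{-powers}) = \frac{1-u^{(1-\a)/\a}}{(\a-1)(u^{1/\a}-1)}$ with $u=\l_i/\l_j$. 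Once that scalar identity is verified, reassembling gives (\ref{eqn:fAlpha}); I expect the argument to be short modulo this computation, which the authors likely streamline using a known formula for the metric induced by a general divergence of the form $\frac{1}{c}\log\Tr(\s^a\r\,\s^a)^b$.
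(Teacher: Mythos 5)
Your overall strategy (compute the Hessian of $D_\alpha$ at the diagonal via the G\^ateaux formulas, pass to the eigenbasis of $\rho$, evaluate the $s$- and $t$-integrals, and read off $f^{(D_\alpha)}$) is the same as the paper's, but your choice to let $X$ act on $\sigma$ introduces two problems. First, a sign: the metric in (\ref{eqn:gD}) is defined with \emph{both} $X$ and $Y$ acting on $\rho$; if one instead uses the mixed form, the correct identity (stated in the paper, and obtained by differentiating $D_\alpha((X)_\rho\|\sigma)|_{\sigma=\rho}=0$ along the diagonal) is $g_\rho^{(D_\alpha)}(X,Y)=-\,D_\alpha((X)_\rho\|(Y)_\sigma)|_{\sigma=\rho}$. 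Your plan identifies the metric with $+\frac{1}{\alpha(\alpha-1)}XY\Phi|_{\sigma=\rho}$ in the mixed parametrization (after correctly noting that the first-order terms vanish), which is the \emph{negative} of $g^{(D_\alpha)}$; carried through, it would produce $-f^{(D_\alpha)}$ and fail to match the claimed (positive, normalized) function. Second, and more substantively, moving $\sigma$ forces you to differentiate the non-integer powers $\sigma^{\frac{1-\alpha}{2\alpha}}$ themselves and the composite $M_\alpha^{\alpha-1}$ in a $\sigma$-direction; this generates several coupled double-divided-difference terms which your plan only gestures at (the unspecified boundary terms from the $\sigma$-powers), and the one-variable identity you propose is not in a verifiable form. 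Since this is exactly the crux of the computation, the lemma is not yet established by the proposal.

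By contrast, the paper stays with the definition: since both derivatives act on $\rho$, the $\sigma$-powers are inert constants. Writing $A=\sigma^{\frac{1-\alpha}{2\alpha}}\rho\,\sigma^{\frac{1-\alpha}{2\alpha}}$, $B_X=\sigma^{\frac{1-\alpha}{2\alpha}}(X\rho)\,\sigma^{\frac{1-\alpha}{2\alpha}}$ and $C_{XY}=\sigma^{\frac{1-\alpha}{2\alpha}}(XY\rho)\,\sigma^{\frac{1-\alpha}{2\alpha}}$, one uses $\Tr A^\alpha|_{\sigma=\rho}=1$ and the vanishing of $\Tr\{A^{\alpha-1}B_X\}$ and $\Tr\{A^{\alpha-1}C_{XY}\}$ at $\sigma=\rho$ to collapse the whole Hessian to the single term $\frac{1}{\alpha-1}\Tr\{(XA^{\alpha-1})B_Y\}|_{\sigma=\rho}$, requiring only one application of (\ref{eqn:GateauxFormula1}) (to $A\mapsto A^{\alpha-1}$ in the direction $B_X$); the eigenbasis evaluation with $\int_0^1 x^ty^{1-t}\,dt=\frac{x-y}{\log x-\log y}$ and $\int_0^\infty\frac{ds}{(s+x)(s+y)}=\frac{\log x-\log y}{x-y}$ then yields $f^{(D_\alpha)}$ directly. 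If you wish to keep your mixed-derivative route, you must restore the minus sign and actually carry out and recombine the extra divided-difference terms coming from the $\sigma$-powers; otherwise, switch to differentiating twice in $\rho$, which makes those terms disappear altogether.
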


\begin{proof}
Recall that the metric $g_\r^{(D_\a)}$ was defined by 
\[ g_\r^{(D_\a)}(X,Y)= \left. \phantom{\prod\!\!\!\!\!\!\!\!} XYD_\a(\r\|\s)\right|_{\s=\r}, \] 
where $X$ and $Y$ act only on $\r$.
Since $D_\a(\r\|\s)$ is written as
\[ D_\a(\r\|\s)=\frac{1}{\a(\a-1)} \log\Tr A^\a \]
where
\[ A:=\s^{\frac{1-\a}{2\a}} \r\, \s^{\frac{1-\a}{2\a}}, \]
we have
\begin{eqnarray*}
YD_\a(\r\|\s)
&=&
\frac{1}{\a(\a-1)} \frac{Y (\Tr A^\a)}{\Tr A^\a} \\
&=&
\frac{1}{\a(\a-1)} \frac{D(\Tr A^\a)[YA]}{\Tr A^\a} \\
&=&
\frac{1}{\a(\a-1)} \frac{\Tr \!\left\{(\a A^{\a-1}) (YA)\right\} }{\Tr A^\a} \\
&=&
\frac{1}{\a-1} \frac{\Tr \!\left\{A^{\a-1} B_Y\right\} }{\Tr A^\a}.
\end{eqnarray*}
Here 
\[ B_Y:=YA=\s^{\frac{1-\a}{2\a}} (Y\r)\, \s^{\frac{1-\a}{2\a}}, \]
and the formula (\ref{eqn:GateauxFormula2}) was used in the third equality.  
Consequently, 
\begin{eqnarray}
XYD_{\a}(\rho\|\sigma)
&=&
 \frac{1}{\a -1}\left[
 \frac{X\Tr \!\left\{A^{\a-1} B_Y\right\}}{\Tr A^\a}
 -\frac{\left(\Tr \!\left\{A^{\a-1} B_Y\right\}\right) \left(\Tr \!\left\{(\a A^{\a-1}) B_X\right\}\right)}{(\Tr A^\a)^2}
 \right] \nonumber \\
&=&
 \frac{1}{\a -1}
 \frac{\Tr \!\left\{(XA^{\a-1}) B_Y\right\}+\Tr\!\left\{A^{\a-1} C_{XY}\right\}}{\Tr A^\a}  \nonumber \\
&&\qquad
 -\frac{\a}{\a-1}
 \frac{\left(\Tr \!\left\{A^{\a-1} B_Y\right\}\right) \left(\Tr \!\left\{A^{\a-1} B_X\right\}\right)}{(\Tr A^\a)^2},
 \label{eqn:YXD}
\end{eqnarray}
where
\[ 
 B_X:=\s^{\frac{1-\a}{2\a}} (X\r)\, \s^{\frac{1-\a}{2\a}}, \qquad
 C_{XY}:=XB_Y=\s^{\frac{1-\a}{2\a}} (XY\r)\, \s^{\frac{1-\a}{2\a}}.
\]
Since
\begin{eqnarray*}
 &&\left. \phantom{\prod\!\!\!\!\!\!\!\!\!} \Tr A^\a\right|_{\s=\r}=\Tr\left(\r^{\frac{1}{\a}}\right)^\a
 	=\Tr\r=1, \\
 && \left. \phantom{\prod\!\!\!\!\!\!\!\!\!} \Tr \left\{A^{\a-1}B_X\right\}\right|_{\s=\r}
 	=\Tr\left\{\r^{\frac{\a-1}{\a}}\r^{\frac{1-\a}{2\a}} (X\r)\, \r^{\frac{1-\a}{2\a}}\right\}
	=\Tr (X\r) =X\Tr\r=0, \\
 &&\left. \phantom{\prod\!\!\!\!\!\!\!\!\!}\Tr \left\{A^{\a-1}C_{XY}\right\}\right|_{\s=\r}
 	=\Tr\left\{\r^{\frac{\a-1}{\a}}\r^{\frac{1-\a}{2\a}} (XY\r)\, \r^{\frac{1-\a}{2\a}}\right\}
	=\Tr (XY\r) =XY\Tr\r=0,
\end{eqnarray*}
we have
\begin{equation}\label{eqn:fAlpha1}
g_\r^{(D_\a)}(X,Y)
=\left. \phantom{\prod\!\!\!\!\!\!\!\!\!} XYD_\a(\r\|\s)\right|_{\s=\r}
=\left. \frac{1}{\a -1} \Tr \!\left\{(XA^{\a-1}) B_Y\right\} \right|_{\s=\r}. 
\end{equation}
Now we invoke the formula (\ref{eqn:GateauxFormula1}), with $\l=\a-1$, to obtain
\begin{eqnarray}
 XA^{\a-1}
 &=&Df(A)[XA] \nonumber\\
 &=&(\a-1) \int_0^1 dt \int_0^\infty ds \, A^{(\a-1)t}(sI+A)^{-1}B_X(sI+A)^{-1} A^{(\a-1)(1-t)}.
 \label{eqn:fAlpha2}
\end{eqnarray}
Combining (\ref{eqn:fAlpha1}) with (\ref{eqn:fAlpha2}), we have
\begin{eqnarray}
&& g_\r^{(D_\a)}(X,Y) \nonumber\\
&&\quad 
  =\Tr \left\{B_Y\left.\int_0^1 dt \int_0^\infty ds \, A^{(\a-1)t}(sI+A)^{-1}B_X(sI+A)^{-1} A^{(\a-1)(1-t)}\right\} \right|_{\s=\r} \nonumber \\
&&\quad
  =\Tr \left\{ \r^{\frac{1-\a}{2\a}}(Y\r) \r^{\frac{1-\a}{2\a}}
	\int_0^1 dt \right. \nonumber\\
&&\qquad\qquad
	\left.\times\int_0^\infty ds \left(\r^{\frac{\a-1}{\a}}\right)^t (sI+\r^{\frac{1}{\a}})^{-1}\r^{\frac{1-\a}{2\a}}(X\r)\r^{\frac{1-\a}{2\a}}(sI+\r^{\frac{1}{\a}})^{-1} \left(\r^{\frac{\a-1}{\a}}\right)^{(1-t)}
	\right\} \nonumber\\
&&\quad
  =\Tr \left\{\r^{\frac{1-\a}{\a}} (Y\r) \r^{\frac{1-\a}{\a}}
	\int_0^1 dt
	\int_0^\infty ds \left(\r^{\frac{\a-1}{\a}}\right)^t (sI+\r^{\frac{1}{\a}})^{-1}(X\r)(sI+\r^{\frac{1}{\a}})^{-1} \left(\r^{\frac{\a-1}{\a}}\right)^{(1-t)}
	\right\}. \nonumber\\
\label{eqn:fAlpha3}
\end{eqnarray}
Comparing (\ref{eqn:fAlpha3}) with (\ref{eqn:fAlpha}), we see that the e-representation $X^{(e)}_{f^{(D_\a)}}$ of $X$ is given by
\[
X^{(e)}_{f^{(D_\a)}}=
\r^{\frac{1-\a}{\a}}
	\int_0^1 dt
	\int_0^\infty ds \left(\r^{\frac{\a-1}{\a}}\right)^t (sI+\r^{\frac{1}{\a}})^{-1}(X\r)(sI+\r^{\frac{1}{\a}})^{-1} \left(\r^{\frac{\a-1}{\a}}\right)^{(1-t)}\r^{\frac{1-\a}{\a}}.
\]
In order to determine the function $f^{(D_\a)}$, we introduce an orthonormal basis $\{e_i\}_{1\le i\le n}$ of $\H$ comprising eigenvectors of $\r$ each corresponding to the eigenvalue $p_i$. 
Then
\begin{eqnarray*}
&&\bracket{e_i}{X^{(e)}_{f^{(D_\a)}}e_j} \\
&&\qquad=
p_i^{\frac{1-\a}{\a}}
	\int_0^1 dt
	\int_0^\infty ds \left(p_i^{\frac{\a-1}{\a}}\right)^t \left(s+p_i^{\frac{1}{\a}}\right)^{-1}
	\bracket{e_i}{(X\r)e_j}
	\left(s+p_j^{\frac{1}{\a}}\right)^{-1} \left(p_j^{\frac{\a-1}{\a}}\right)^{(1-t)}p_j^{\frac{1-\a}{\a}} \\
&&\qquad=
\bracket{e_i}{(X\r)e_j} (p_i p_j)^{\frac{1-\a}{\a}}
	\int_0^1 dt \left(p_i^{\frac{\a-1}{\a}}\right)^t \left(p_j^{\frac{\a-1}{\a}}\right)^{(1-t)} 
	\int_0^\infty ds\left(s+p_i^{\frac{1}{\a}}\right)^{-1} \left(s+p_j^{\frac{1}{\a}}\right)^{-1}.	
\end{eqnarray*}
Since $X^{(e)}_{f^{(D_\a)}}$ is continuous in $\r$,  we can assume without loss of generality that eigenvalues $p_i$ are all different. 
Then by using the formulae
\[
 \int_0^1 x^t y^{1-t}dt=\frac{x-y}{\log x-\log y}
\]
and
\[
\int_0^\infty\frac{ds}{(s+x)(s+y)}=\frac{\log x-\log y}{x-y}
\]
for $x\neq y$, we get
\begin{eqnarray*}
\bracket{e_i}{X^{(e)}_{f^{(D_\a)}}e_j}
&\!\!=\!\!&
\bracket{e_i}{(X\r)e_j} (p_i p_j)^{\frac{1-\a}{\a}}
	\times \frac{p_i^{\frac{\a-1}{\a}}-p_j^{\frac{\a-1}{\a}}}{\log p_i^{\frac{\a-1}{\a}}-\log p_j^{\frac{\a-1}{\a}}}
	\times \frac{\log p_i^{\frac{1}{\a}}-\log p_j^{\frac{1}{\a}}}{p_i^{\frac{1}{\a}}-p_j^{\frac{1}{\a}}} \\
&\!\!=\!\!&
\bracket{e_i}{(X\r)e_j} 
\frac{\displaystyle 1-\left(\frac{p_i}{p_j}\right)^{\frac{1-\a}{\a}}}
{\displaystyle (\a-1)\, p_j\left\{\left(\frac{p_i}{p_j}\right)^{\frac{1}{\a}}-1\right\}}\\
&\!\!=\!\!&
\frac{\bracket{e_i}{(X\r)e_j}}{\displaystyle p_j\, f^{(D_\a)}\left(\frac{p_i}{p_j}\right)},
\end{eqnarray*}
for all $i\neq j$, where
\[
 f^{(D_\a)}(t)=(\a -1) \frac{t^{\frac{1}{\a}}-1}{\,1-t^{\frac{1-\a}{\a}}\,}.
\]
This completes the proof. 
\end{proof}

Let us examine some special cases. 
When $\a=\frac{1}{2}$, the function $f^{(D_{{1}/{2}})}(t)=\frac{1+t}{2}$ corresponds to the SLD metric, 
and when $\a=-1$, the function $f^{(D_{-1})}(t)=\frac{2t}{1+t}$ corresponds to the real RLD metric. 
Furthermore, the limiting function $f^{(D_{1})}(t):=\lim_{\a\to 1}f^{(D_{\a})}(t)=\frac{t-1}{\log t}$ corresponds to the Bogoliubov metric: this is consistent with the fact that $D_{1}(\r\| \s):=\lim_{\a\to 1}D_{\a}(\r\| \s)$ is the Umegaki relative entropy. 
It is well known that these three functions are operator monotone. 
Incidentally, another limiting function 
$f^{(D_{\pm\infty})}(t):=\lim_{\a\to\pm\infty}f^{(D_{\a})}(t)=\frac{t\,\mathrm{log}t}{t-1}=t/f^{(D_{1})}(t)$ is also operator monotone. 

\subsection{Operator monotonicity}

In what follows, we change the parameter $\a$ into $\b:=\frac{1}{\a}$, and denote the corresponding function $f^{(D_\a)}(t)$ by $f_\b(t)$, i.e., 
\[
 f_\b (t):=\frac{\b-1}{\b} \frac{t^\b - 1}{t^{\b-1}-1}
\]
where $\b\notin\{0,1\}$. 
We extend this function to $\b=0$ and $1$ by continuity, to obtain
\begin{eqnarray*}
 f_0(t)&\!\!:=\!\!&\lim_{\b\to 0}  f_\b (t)=\frac{t \log t}{t-1}, \\ 
 f_1(t)&\!\!:=\!\!&\lim_{\b\to 1}  f_\b (t)=\frac{t-1}{\log t}.  
\end{eqnarray*}

\begin{lemma}\label{lem:fBeta}
The function $f_\b(t)$ is operator monotone if and only if $\b\in[-1, 2]$.
\end{lemma}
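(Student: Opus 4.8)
The plan is to characterize operator monotonicity of $f_\b$ on $(0,\infty)$ via the standard integral-representation machinery for operator monotone functions. Recall that a function $h:(0,\infty)\to(0,\infty)$ with $h(1)=1$ that satisfies the symmetry $h(t)=th(1/t)$ is operator monotone if and only if it admits a representation
\[
 \frac{1}{h(t)}=\int_0^\infty \frac{1+\l}{t+\l}\,d\m(\l)
\]
for some probability measure $\m$ on $[0,\infty]$; equivalently $t/h(t)$ (which is again a symmetric normalized function) is operator monotone, and operator monotone functions are closed under the transformations used above. So the first step I would take is to reduce the claim to a question about a single-variable function: write $f_\b(t)$ in a form that isolates its analytic structure. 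A convenient route is to observe that $f_\b$ is, up to the prefactor $\frac{\b-1}{\b}$, the ratio $\frac{t^\b-1}{t^{\b-1}-1}$, and to substitute $t=e^{2x}$ so that everything becomes expressible through hyperbolic functions: $\frac{t^\b-1}{t^{\b-1}-1}=e^{x}\cdot\frac{\sinh(\b x)}{\sinh((\b-1)x)}$. Combined with the prefactor this should produce $f_\b(e^{2x})$ as an explicit elementary function of $x$ and $\b$ whose symmetry $x\mapsto -x$ reflects $t\mapsto 1/t$.

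The second step is the actual test for operator monotonicity. The cleanest sufficient-and-necessary criterion here is that $f_\b$ extends to a holomorphic function on the upper half-plane $\mathbb{H}=\{z:\operatorname{Im}z>0\}$ mapping $\mathbb{H}$ into $\overline{\mathbb{H}}$ (a Pick/Nevanlinna–Herglotz function), and is positive on $(0,\infty)$; this is exactly Löwner's theorem. So I would analytically continue $t\mapsto f_\b(t)$ to $z\in\mathbb{H}$, write $z=re^{i\theta}$ with $0<\theta<\pi$, and compute the argument (imaginary part of the logarithm, or the sign of $\operatorname{Im}f_\b(z)$). Using the $e^{2x}$ substitution with $x$ now complex, $x=u+iv$, the condition $z\in\mathbb{H}$ corresponds to a strip, and $\operatorname{Im}f_\b$ should reduce to checking the sign of an expression built from $\sin$ of linear combinations of $\b v$, $(\b-1)v$, and $v$. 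The parameter range $\b\in[-1,2]$ will then emerge as precisely the set of $\b$ for which all the relevant sine factors keep a consistent sign as $v$ ranges over $(0,\pi)$ — the endpoints $\b=-1$ and $\b=2$ being where the argument of one of the sines first reaches $\pi$ (equivalently, where $f_{-1}(t)=2t/(1+t)$ and $f_2(t)=(1+t)/2$ hit the boundary of the operator-monotone cone, the RLD and SLD metrics noted just before the lemma).

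For the "only if" direction I would either push the same Pick-function analysis to show that when $\b\notin[-1,2]$ the function $f_\b$ maps some point of $\mathbb{H}$ strictly into the lower half-plane, or, more concretely, exhibit a $2\times 2$ counterexample: pick operators $0<A<B$ in $M_2$ and show $f_\b(A)\not\le f_\b(B)$ by evaluating the relevant $2\times 2$ Löwner (Pick) matrix $\big[\frac{f_\b(p_i)-f_\b(p_j)}{p_i-p_j}\big]$ and checking it fails to be positive semidefinite for suitable eigenvalues $p_1,p_2$; operator monotonicity on all matrix sizes is equivalent to positivity of these divided-difference matrices of every size, and failure already at size two suffices. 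The main obstacle I anticipate is the bookkeeping in the complex-analytic step: controlling the argument of $f_\b(z)=\frac{\b-1}{\b}\,e^{x}\frac{\sinh\b x}{\sinh(\b-1)x}$ uniformly as $z$ ranges over $\mathbb{H}$ and as $\b$ ranges over the real line, and correctly handling the sign flips of the prefactor $\frac{\b-1}{\b}$ in the regimes $\b<0$, $0<\b<1$, $\b>1$ so that the three sub-cases glue into the single interval $[-1,2]$. A secondary subtlety is treating the removable singularities at $\b=0,1$ and at $t=1$ so that the claimed continuity extensions $f_0,f_1$ are genuinely the boundary members of the operator-monotone family, which I would dispatch by noting that pointwise limits of operator monotone functions are operator monotone.
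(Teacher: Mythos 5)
Your overall strategy (Löwner's theorem: verify that $f_\beta$ is a Pick function for $\beta\in[-1,2]$, and disprove matrix monotonicity outside that range) is a legitimate and genuinely different route from the paper's, but as written the central steps are asserted rather than carried out, and the key assertion is not accurate. With $t=e^{2x}$, $x=u+iv$, the upper half-plane corresponds to the strip $0<v<\pi/2$, and $\sinh(\beta(u+iv))=\sinh(\beta u)\cos(\beta v)+i\cosh(\beta u)\sin(\beta v)$, so $\arg\sinh(\beta x)$ depends on the real part $u$ as well as on $v$; consequently $\operatorname{Im}f_\beta(e^{2x})$ does \emph{not} reduce to "checking the sign of sines of linear combinations of $\beta v$, $(\beta-1)v$, $v$". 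What actually has to be shown is that $v+\arg\sinh(\beta x)-\arg\sinh((\beta-1)x)+\arg\frac{\beta-1}{\beta}$ stays in $[0,\pi]$ uniformly in $u\in\mathbb{R}$, $v\in(0,\pi/2)$, and this uniform control over the strip (in the three sign regimes of the prefactor) is precisely the hard part you have deferred; it is the sort of analysis carried out in the Hiai--Kosaki literature on operator means, but it does not follow from the reduction you describe. Similarly, your "only if" direction is a choice between two unexecuted plans: you would need, for \emph{every} $\beta\notin[-1,2]$, either a point of the upper half-plane mapped strictly below the real axis or an explicit pair of eigenvalues making the $2\times 2$ divided-difference matrix indefinite, and neither is exhibited. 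So there is a genuine gap in both directions.

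For comparison, the paper avoids complex analysis entirely. For the "if" part it uses the identity $f_{\frac12-\delta}(t)=t/f_{\frac12+\delta}(t)$ to reduce to $\beta\in[\tfrac12,2]$, and then obtains operator monotonicity from elementary closure properties: the first divided difference of the operator convex (resp.\ concave) power $t^{\gamma}$ is operator monotone (resp.\ operator monotone decreasing), $x\mapsto -1/x$ is operator monotone, and composition with the operator monotone map $t\mapsto t^{\beta}$ is allowed. For the "only if" part it invokes Petz's bounds $\frac{2t}{1+t}\le f(t)\le\frac{1+t}{2}$, valid for every normalized symmetric operator monotone function, and shows by a short calculus argument (monotonicity in $\beta$ of $\frac{\beta-1}{\beta}\frac{e^{\beta}-1}{e^{\beta-1}-1}$, via convexity of $\log\frac{e^{\beta}-1}{\beta}$) that these bounds are violated at the single point $t=e$ whenever $\beta<-1$ or $\beta>2$. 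If you want to keep your framework, the quickest repair of your converse is exactly this: your own integral representation shows the harmonic and arithmetic means are extremal among symmetric normalized operator monotone functions, which yields the same two-sided bound; but the single-point violation still has to be proved, and the Pick-function verification for the "if" direction still needs the uniform argument estimate on the strip.
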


\begin{proof}
We first prove the `if' part%
\footnote{
After almost completing the paper, the authors became aware that the `if' part had been proved in \cite{Furuta:2008}. 
Our proof is slightly simpler.}. 
A key observation is the identity
\[
f_{\frac{1}{2}-\delta}(t)=\frac{t}{f_{\frac{1}{2}+\delta}(t)}
\]
for all $\d\in\R$, which is easily verified by direct computation. 
It follows that if $f_{\frac{1}{2}+\delta}(t)$ is operator monotone, so is $f_{\frac{1}{2}-\delta}(t)$.
It then suffices to prove that $f_\b(t)$ is operator monotone if $\b\in[\frac{1}{2}, 2]$. 
Firstly, operator monotonicity of $f_1(t)$ is well known. 
Secondly, for $\b\in[\frac{1}{2},1)$, let us set $\ga:=1-\frac{1}{\b}$, which satisfies $-1\le\ga<0$. 
Since $t\mapsto t^\ga$ is operator convex, its first divided difference 
\[
\frac{t^\ga-1}{t-1}
\]
at $t=1$ is operator monotone \cite[Theorem V.3.10]{Bhatia}.
Moreover, since $x \mapsto -\frac{1}{x}$ is operator monotone, so is the function
\[
h_\ga(t):=\ga\,\frac{t-1}{t^\ga-1}.
\]
Furthermore, since the function $t \mapsto t^\b$ is operator monotone, so is 
\[
 f_\b(t)=h_\ga(t^\b).
\]
Finally, for $\b\in (1,2]$, rewrite $f_\b(t)$ into
\[
f_\b(t)
=
\frac{\b-1}{\b} 
\left(t + \frac{t-1}{t^{\beta -1}-1}\right).
\] 
Since $\frac{\beta-1}{\beta} >0$ and the map $x\mapsto\frac{1}{x}$ is order reversing, 
it suffices to show that the function 
\[
 t\longmapsto \frac{t^{\b-1}-1}{t-1}
\]
is operator monotone decreasing, and this is true because the above function is the first divided difference of an operator concave function $t\mapsto t^{\b-1}$. 

We next prove the `only if' part. 
Suppose $f_\b(t)$ is operator monotone. Then it must satisfy the inequalities
\[
\frac{2t}{1+t} \leq f_\beta(t) \leq \frac{1+t}{2}
\]
for all $t>0$ \cite{Petz}. 
In particular, by letting $t=e$, we have
\[ 
\frac{2e}{1+e} \le
\frac{\b-1} {\b} \frac{e^\b -1}{e^{\b-1}-1}
\leq \frac{1+e}{2}.
\] 
We shall prove that these inequalities hold only when $\b\in[-1, 2]$. 
Since
\[
 \left.\frac{\b-1} {\b} \frac{e^\b -1}{e^{\b-1}-1} \right|_{\b=-1}
 =\frac{2e}{1+e},
\]
and
\[
 \left.\frac{\b-1} {\b} \frac{e^\b -1}{e^{\b-1}-1} \right|_{\b=2}
 =\frac{1+e}{2},
\]
it suffices to prove that the function
\[
\b \longmapsto \frac{\b-1}{\b}\frac{e^\b -1}{e^{\b-1} -1}
\]
is strictly increasing for $\b\le -1$ or $\b\ge 2$. 
Taking the logarithm, this is rephrased that the function
\[
\b \longmapsto h(\b) - h(\b -1),
\]
where
\[
h(\b):=\log \frac{e^\b -1 }{\b}, 
\]
is strictly increasing, or equivalently, 
\[ h'(\b) - h'(\b-1) \]
is positive for $\b\le -1$ or $\b\ge 2$.
In view of the mean value theorem, we prove this by showing that $h''(\b)>0$ for $\b\le -1$ or $\b\ge 1$.
Since
\[
h''(\b)=\frac{1}{\b^2}-\frac{1}{\displaystyle \left(e^{\frac{\b}{2}}-e^{-\frac{\b}{2}} \right)^2}, 
\]
it suffices to prove that
\begin{equation}\label{eqn:monotoneBound2}
 \tilde h(\b):=\frac{e^{\frac{\b}{2}}-e^{-\frac{\b}{2}}}{\b}>1
\end{equation}
for $\b\neq 0$. 
Since $\tilde h(-\b)=\tilde h(\b)$, $\lim_{\b \to 0} \tilde h(\b)= 1$, and the derivative
\[
 \tilde h'(\b)=\frac{e^{\frac{\b}{2}}(\b-2)+e^{-\frac{\b}{2}}(\b+2)}{2\b^2}
 =\frac{e^{-\frac{\b}{2}}}{2\b^2}\int_0^\b x\, (\b-x)\, e^x\, dx
\]
is positive for $\b>0$, 
the inequality (\ref{eqn:monotoneBound2}) is verified.
\end{proof}

\section{Structure of quantum statistical manifold $\S$} \label{Structure}

In this section, we study the dualistic structure $(g^{(D_\a)},\nabla^{(D_\a)},\nabla^{(D_\a)*})$ on the quantum statistical manifold $\S$. 
In a quite similar way to the derivation of (\ref{eqn:fAlpha1}), 
it is proved (cf., Appendix \ref{app:connection}) that 
the affine connections $\nabla^{(D_\a)}$ and $\nabla^{(D_\a)*}$ 
defined by (\ref{eqn:nablaD}) and (\ref{eqn:nablaD*})  are explicitly given by
\begin{eqnarray}
g_\r^{(D_\a)}(\nabla^{(D_\a)}_XY,Z) 
&\!\!=\!\!&
 \frac{1}{\a -1}\left( \left.\phantom{\prod}\!\!\!\!\!\!\!\! \Tr
  \left\{ (ZXA^{\a-1}) B_Y+(ZA^{\a-1}) C_{XY} \right\} \right|_{\s=\r} \right. \nonumber\\
&&\qquad\qquad
-\left. 
\Tr\left\{ (Y\r)\, Z \left[ \r^{\frac{1-\a}{2\a}} \left.\phantom{\prod}\!\!\!\!\!\!\!\!(XA^{\a-1})\right|_{\s=\r} \r ^ {\frac{1-\a}{2\a}}\right]\right\} \right),
\label{eqn:nabla}
\end{eqnarray}
and
\begin{eqnarray}
g_\r^{(D_\a)}(\nabla^{(D_\a)*}_XY,Z) 
&\!\!=\!\!&
 \frac{1}{\a -1}\left( \left.\phantom{\prod}\!\!\!\!\!\!\!\! \Tr
  \left\{ (XB_Z)(YA^{\a-1})-(YXA^{\a-1}) B_Z-(YA^{\a-1}) C_{XZ} \right\} \right|_{\s=\r} \right. \nonumber \\
&&\qquad\qquad\qquad
+\Tr\left\{ (Z\r)\, X \left[ \r^{\frac{1-\a}{2\a}} \left.\phantom{\prod}\!\!\!\!\!\!\!\!(YA^{\a-1})\right|_{\s=\r} \r ^ {\frac{1-\a}{2\a}}\right]\right\} \nonumber \\
&&\qquad\qquad\qquad
+\left. 
\Tr\left\{ (Z\r)\, Y \left[ \r^{\frac{1-\a}{2\a}} \left.\phantom{\prod}\!\!\!\!\!\!\!\!(XA^{\a-1})\right|_{\s=\r} \r ^ {\frac{1-\a}{2\a}}\right]\right\} \right).
\label{eqn:nablaStar}
\end{eqnarray} 
Now, if the quantum state space $\S$ is dually flat with respect to the dualistic structure $(g^{(D_\a)},\nabla^{(D_\a)},\nabla^{(D_\a)*})$, then there is a pair of affine coordinate systems that allows us a variety of information geometrical techniques on $\S$ \cite{AmariNagaoka}.
It is therefore interesting to ask which value of $\a$ makes $\S$ dually flat. 
The answer is given by Theorem \ref{thm:flat}:  the quantum statistical manifold $(\S, g^{(D_\a)},\nabla^{(D_\a)},\nabla^{(D_\a)*})$ is dually flat if and only if $\a=1$. 

\bigskip\noindent
{\em Proof of Theorem \ref{thm:flat}.}
When $\a=1$, the sandwiched R\'enyi $\a$-divergence is reduced to the Umegaki relative entropy 
$D_1(\r\|\s)=\Tr\{\r(\log\r-\log\s)\}$, 
and the dually flatness of $\S$ with respect to $(g^{(D_1)},\nabla^{(D_1)},\nabla^{(D_1)*})$ is well known \cite{AmariNagaoka}. 

To prove the necessity, let us take a submanifold $\M$ of $\S$ comprising commutative density operators, which can be regarded as the space of classical probability distributions. 
Since the sandwiched R\'enyi $\a$-divergence restricted to $\M$ is identical, up to the first order, to the classical ``alpha-divergence'' as  (\ref{eqn:divergenceExpansion}), 
the restricted metric $\left.g^{(D_\a)}\right|_\M$ is identical to the classical Fisher metric for all $\a$, and the restricted connections $\left.\nabla^{(D_\a)}\right|_\M$ and $\left.\nabla^{(D_\a)*}\right|_\M$ are the $(2\a-1)$- and the $(1-2\a)$-connections, respectively, in the standard terminology of classical information geometry. 
Consequently, $\M$ is dually flat if and only if $\a=0$ or $1$ \cite{AmariNagaoka}. 
Now suppose that $\S$ is dually flat with respect to the structure $(g^{(D_\a)},\nabla^{(D_\a)},\nabla^{(D_\a)*})$. 
Then the submanifold $\M$ is also dually flat with respect to the restricted structure $(\left.g^{(D_\a)}\right|_\M, \left.\nabla^{(D_\a)}\right|_\M, \left.\nabla^{(D_\a)*}\right|_\M)$. 
Since the sandwiched R\'enyi $0$-divergence is excluded on $\S$, as mentioned in Section 1,
the only remaining possibility is $\a=1$.
\B

A closely related question is this: Is there a triad $(\a,\b,\ga)$ of real numbers for which   
$(\S, g^{(D_\a)},\nabla^{(D_\b)},\nabla^{(D_\ga)})$ becomes dually flat? 
The answer is negative. 
In fact, 
the connections $\left.\nabla^{(D_\b)}\right|_\M$ and $\left.\nabla^{(D_\ga)}\right|_\M$ restricted to a commutative submanifold $\M$ are the $(2\b-1)$- and the $(2\ga-1)$-connections, respectively. 
If $(\S, g^{(D_\a)},\nabla^{(D_\b)},\nabla^{(D_\ga)})$ is dually flat, then the pair $(\b,\ga)$ must be either $(1,0)$ or $(0,1)$, as discussed above. 
Since the sandwiched R\'enyi $0$-divergence is excluded on $\S$, 
we conclude that there is no triad $(\a,\b,\ga)$ that makes   
$(\S, g^{(D_\a)},\nabla^{(D_\b)},\nabla^{(D_\ga)})$ dually flat. 

\section{Concluding remarks}

In the present paper, we studied information geometrical structure of the quantum state space $\S$ induced from the sandwiched R\'enyi $\a$-divergence $D_\a(\r\|\s)$, a variant of the quantum relative entropy recently proposed by Wilde {\em et al.} \cite{WildeWY} and M\"uller-Lennert {\em et al.} \cite{MullerLennertDSFT}.
We found that the induced Riemannian metric $g^{(D_\a)}$ is monotone if and only if $\a\in(-\infty, -1]\cup [\frac{1}{2}, \infty)$, and that the induced dualistic structure $(g^{(D_\a)}, \nabla^{(D_\a)},\nabla^{(D_\a)*})$ makes the quantum state space $\S$ dually flat if and only if $\a=1$. 

The result about the monotonicity of $g^{(D_\a)}$, which is consistent with the known monotonicity of $D_\a(\r\|\s)$ 
for $\a\in [\frac{1}{2}, \infty)$ \cite{FrankL}, strongly suggests that $D_\a(\r\|\s)$ might be monotone also for $\a\in(-\infty, -1]$.
This problem raises another interesting question about reconstructing $D_\a(\r\|\s)$ 
from a purely differential geometrical viewpoint. 
It is well known that the canonical divergence on a dually flat statistical manifold $(M, g, \nabla, \nabla^*)$ 
is reconstructed by integrating the metric $g$ along either $\nabla$ or $\nabla^*$-geodesic \cite{{AmariNagaoka},{AyAmari:2015}}. 
Unfortunately, this method is not applicable to our problem because the quantum statistical manifold $(S, g^{(D_\a)}, \nabla^{(D_\a)},\nabla^{(D_\a)*})$ is not dually flat unless $\a=1$.
If such a differential geometrical method of reconstructing a divergence function is successfully extended to non-flat statistical manifolds, then we may have a new, direct method of proving the monotonicity of a global quantity $D_\a(\r\|\s)$ from a local information $(g^{(D_\a)}, \nabla^{(D_\a)},\nabla^{(D_\a)*})$ on the quantum statistical manifold $\S$.

\section*{Acknowledgments}

The authors are grateful to Professor Tomohiro Ogawa for helpful discussions. 
The present study was supported by JSPS KAKENHI Grant Number JP22340019.

\appendix
\section*{Appendices}

\section{Non-extendibility of $D_\a$ to $\a=0$}\label{app:Dalpha0}

The following Proposition is a special case of \cite[Lemma 1]{AudenaertD}.

\begin{proposition}\label{prop:limitZero}
For all $\r, \s\in\S$,
\[
 \lim_{\a\to 0} \left(\s^{\frac{1-\a}{2\a}}\r\,\s^{\frac{1-\a}{2\a}}\right)^{\a}=\s.
\]
\end{proposition}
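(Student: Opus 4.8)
The plan is to reduce the statement to a scalar limit by diagonalizing $\sigma$ and then analyzing how the factors $\sigma^{(1-\a)/(2\a)}$ act on the matrix entries of $\r$ in that basis. Fix an orthonormal basis $\{e_i\}$ of eigenvectors of $\sigma$ with $\sigma e_i = q_i e_i$, $q_i>0$ (possible since $\sigma\in\S$ is faithful). Write $A_\a := \sigma^{\frac{1-\a}{2\a}}\r\,\sigma^{\frac{1-\a}{2\a}}$; in the chosen basis its $(i,j)$ entry is $q_i^{\frac{1-\a}{2\a}}\,\r_{ij}\,q_j^{\frac{1-\a}{2\a}} = (q_iq_j)^{\frac{1-\a}{2\a}}\,\r_{ij}$. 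For $\a$ near $0$ the exponent $\frac{1-\a}{2\a}$ diverges to $+\infty$ as $\a\downarrow 0$ and to $-\infty$ as $\a\uparrow 0$, so the conjugation grossly inflates or deflates entries depending on the size of $q_iq_j$ relative to $1$. The key point is that raising $A_\a$ to the $\a$-th power exactly compensates this: one expects $(A_\a)^\a \to \sigma$ because the dominant scaling $(q_iq_j)^{\frac{1-\a}{2}}$ inside $(A_\a)^\a$ tends to $(q_iq_j)^{1/2}$, while the residual factor $\r_{ij}$ gets ``washed out'' on the off-diagonal and contributes only through $\r_{ii}$, which drops out because the eigenvector structure of $A_\a$ near $\a=0$ is governed entirely by $\sigma$.

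The cleanest route is probably \emph{not} to track matrix entries directly but to use a similarity/scaling trick: write $A_\a = \sigma^{\frac{1-\a}{2\a}}\r\,\sigma^{\frac{1-\a}{2\a}}$ and factor out $\sigma$ in a symmetric way. Set $B_\a := \sigma^{-1/2}A_\a\,\sigma^{-1/2} = \sigma^{\frac{1-\a}{2\a}-\frac12}\,\r\,\sigma^{\frac{1-\a}{2\a}-\frac12} = \sigma^{\frac{1-2\a}{2\a}}\,\r\,\sigma^{\frac{1-2\a}{2\a}}$. Then $A_\a = \sigma^{1/2}B_\a\sigma^{1/2}$, but since $\sigma^{1/2}$ and $B_\a$ need not commute this is not directly a power identity. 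A better bookkeeping: note $(A_\a)^\a = (\sigma^{\frac{1-\a}{2\a}}\r\,\sigma^{\frac{1-\a}{2\a}})^\a$ and that for a product $XYX^\dagger$ one has $(XYX^\dagger)^\a$ related by similarity to $(YX^\dagger X)^\a$; concretely $(\sigma^{c}\r\sigma^{c})^\a = \sigma^{c}(\r\sigma^{2c})^{\a}\sigma^{-c}\cdot\sigma^{?}$ — care is needed here, so I would instead invoke the standard fact that the nonzero eigenvalues of $\sigma^{c}\r\sigma^{c}$ equal those of $\r\sigma^{2c}$ (a non-selfadjoint but diagonalizable operator) and pass to the limit at the level of this reduced object with $c=\frac{1-\a}{2\a}$, $2c = \frac{1-\a}{\a}$.

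In detail the steps I would carry out are: (i) reduce to $\sigma$ diagonal; (ii) show $(A_\a)^\a$ and $\r\sigma^{\frac{1-\a}{\a}}$ are related by an explicit $\a$-dependent similarity $(A_\a)^\a = \sigma^{\frac{1-\a}{2\a}}\bigl(\r\,\sigma^{\frac{1-\a}{\a}}\bigr)^{\a}\sigma^{-\frac{1-\a}{2\a}}$, which holds because $\sigma^c(\r\sigma^{2c})=(\sigma^c\r\sigma^c)\sigma^c$ intertwines, hence intertwines any function, in particular the $\a$-th power; (iii) write $\r\,\sigma^{\frac{1-\a}{\a}} = \r\,\sigma^{-1}\cdot\sigma^{1/\a} = (\r\sigma^{-1})\,\sigma^{1/\a}$ and analyze $\bigl((\r\sigma^{-1})\sigma^{1/\a}\bigr)^{\a}$ as $\a\to 0$, where $\sigma^{1/\a}\to$ the spectral projection onto the top eigenspace of $\sigma$ if $\|\sigma\|<1$ — so here one must be careful that $\sigma$ has largest eigenvalue possibly $<1$; (iv) combine with the outer conjugation $\sigma^{\pm\frac{1-\a}{2\a}}$, whose divergence is exactly cancelled, to recover $\sigma$ in the limit. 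I expect step (ii)–(iii) to be the main obstacle: controlling $\bigl(\r\,\sigma^{(1-\a)/\a}\bigr)^{\a}$ requires a uniform perturbation estimate showing the eigenvalues of this non-normal matrix behave like $q_i\cdot(1+O(\a))$ — one can get this from a resolvent/Dunford integral $\frac{1}{2\pi i}\oint z^\a (zI-\r\sigma^{(1-\a)/\a})^{-1}dz$ with contours around each $q_i$, checking the off-diagonal ``spillover'' terms are $o(1)$ after the outer conjugation. Alternatively, and perhaps more simply, one cites \cite[Lemma~1]{AudenaertD} directly, as the paper says this Proposition is a special case; the self-contained argument above is the fallback if an independent proof is wanted.
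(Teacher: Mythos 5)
Your self-contained argument has a genuine gap precisely where you yourself locate the difficulty: steps (ii)--(iv) are left as expectations, not proofs. After the similarity $(A_\a)^\a=\s^{\frac{1-\a}{2\a}}\bigl(\r\,\s^{\frac{1-\a}{\a}}\bigr)^{\a}\s^{-\frac{1-\a}{2\a}}$ you must control the $\a$-th power of a non-normal matrix whose spectrum degenerates as $\a\to 0$, and then show that the divergent outer conjugation does not destroy the convergence; the proposed resolvent/Dunford estimate is never carried out, and the heuristics guiding it are partly wrong. For a faithful state on $\dim\H\ge 2$ all eigenvalues of $\s$ are strictly less than $1$, so $\s^{1/\a}\to 0$ as $\a\downarrow 0$ (and blows up as $\a\uparrow 0$); it does not tend to a spectral projection. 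Likewise the eigenvalues of $\r\,\s^{\frac{1-\a}{\a}}$ are of order $q_i^{\frac{1-\a}{\a}}$ times bounded factors, i.e.\ they vanish or diverge; only after taking the $\a$-th power do they approach $q_i$, so the claimed ``$q_i(1+O(\a))$'' behaviour of that matrix is not available as an input to the perturbation argument. As it stands, the entry-wise/contour-integral route would require a uniform control of eigenprojections of a non-normal family with collapsing spectral gaps, which is exactly the hard analytic content you have not supplied. Citing \cite[Lemma 1]{AudenaertD} is of course a legitimate way out, but then nothing of the outlined argument is needed.

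The idea you are missing is much more elementary, and it is what the paper does: a two-sided operator bound plus operator monotonicity. Since $\dim\H<\infty$ and $\r>0$, choose $\l,\m>0$ with $\l I\le\r\le\m I$; conjugating by $\s^{\frac{1-\a}{2\a}}$ gives $\l\,\s^{\frac{1-\a}{\a}}\le \s^{\frac{1-\a}{2\a}}\r\,\s^{\frac{1-\a}{2\a}}\le \m\,\s^{\frac{1-\a}{\a}}$. For $\a\in(0,1)$ the map $t\mapsto t^\a$ is operator monotone, yielding
\[
 \l^\a\,\s^{1-\a}\;\le\;\left(\s^{\frac{1-\a}{2\a}}\r\,\s^{\frac{1-\a}{2\a}}\right)^{\a}\;\le\;\m^\a\,\s^{1-\a},
\]
while for $\a\in(-1,0)$ it is operator monotone decreasing and the inequalities reverse. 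Both bounds converge to $\s$ as $\a\downarrow 0$ (resp.\ $\a\uparrow 0$), so the squeeze gives the claim with no diagonalization, no similarity to non-normal operators, and no spectral perturbation estimates. I would recommend replacing your outline by this argument.
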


\begin{proof}
Since $\dim\H<\infty$ and $\r>0$, there are positive numbers $\l$ and $\m$ that satisfy $\l I \le \r \le \m I$.
This entails that 
\[
  \l \s^{\frac{1-\a}{\a}} \le \s^{\frac{1-\a}{2\a}}\r\,\s^{\frac{1-\a}{2\a}} \le \m \s^{\frac{1-\a}{\a}}.
\]
For $\a\in(0,1)$, the function $f(t)=t^\a$ with $t>0$ is operator monotone, and 
\begin{equation}\label{eqn:boundSandwich1}
  \l^\a \s^{1-\a} \le  \left(\s^{\frac{1-\a}{2\a}}\r\,\s^{\frac{1-\a}{2\a}}\right)^{\a} \le \m^\a \s^{1-\a}.
\end{equation}
For $\a\in(-1,0)$, on the other hand, the  function $f(t)=t^\a$ with $t>0$ is operator monotone decreasing, and 
\begin{equation}\label{eqn:boundSandwich2}
  \m^\a \s^{1-\a} \le  \left(\s^{\frac{1-\a}{2\a}}\r\,\s^{\frac{1-\a}{2\a}}\right)^{\a} \le \l^\a \s^{1-\a}.
\end{equation}
Taking the limit $\a\downarrow 0$ in (\ref{eqn:boundSandwich1}), and $\a\uparrow 0$ in  (\ref{eqn:boundSandwich2}), we have the assertion. 
\end{proof}

In view of Proposition \ref{prop:limitZero}, 
as well as the evaluation
\[ -\log\m-H(\s) \le \liminf_{\a\to 0} D_\a(\r\|\s) \le \limsup_{\a\to 0} D_\a(\r\|\s) \le -\log\l-H(\s), \]
where $H(\s)$ is the von Neumann entropy, which follows from (\ref{eqn:boundSandwich1}) and (\ref{eqn:boundSandwich2}),
it is natural to expect that $D_\a(\r\|\s)$ could be continuously extended to $\a=0$. 
In reality, it is in general untrue, as the following example shows.

Let $\H=\C^2$ and let
\[
 \r=\left[\array{cc} {1}/{2} & {1}/{4} \\ {1}/{4} & {1}/{2} \endarray\right],
 \qquad
 \s=\left[\array{cc} {3}/{4} & 0\\ 0 & {1}/{4} \endarray\right].
\]
The eigenvalues of $\s^{\frac{1-\a}{2\a}}\r\,\s^{\frac{1-\a}{2\a}}$ are
\[
 \frac{1}{3\cdot 4^{{1}/{\a}}} \left(3+3^{{1}/{\a}} \pm \sqrt{9-3\cdot 3^{{1}/{\a}}+9^{{1}/{\a}}} \right),
\]
and $\Tr \left(\s^{\frac{1-\a}{2\a}}\r\,\s^{\frac{1-\a}{2\a}}\right)^{\a}$ is given by
\[
\frac{1}{3^\a\cdot 4} 
 \left\{\left(3+3^{{1}/{\a}} + \sqrt{9-3\cdot 3^{{1}/{\a}}+9^{{1}/{\a}}} \right)^\a
 +\left(3+3^{{1}/{\a}} - \sqrt{9-3\cdot 3^{{1}/{\a}}+9^{{1}/{\a}}} \right)^\a \right\}.
\]
By direct computation, we obtain
\[ 
 \lim_{\a\downarrow 0}  D_\a(\r\|\s)=\frac{1}{2}\log \left(\frac{3}{2}\right)
\]
and
\[
 \lim_{\a\uparrow 0}  D_\a(\r\|\s)=\frac{1}{2}\log 2, 
\]
proving that 
\[ \lim_{\a\downarrow 0}  D_\a(\r\|\s) \neq \lim_{\a\uparrow 0}  D_\a(\r\|\s).\] 

\section{A basic property of $D_\a$}\label{app:Dalpha}

The following Proposition
is an extension of the results by Wilde {\em et al.} \cite[Corollaries 7 and 8]{WildeWY} for $1<\a\le 2$, M\"uller-Lennert {\em et al.} \cite[Theorem 3]{MullerLennertDSFT} for $\a\ge\frac{1}{2}$, and Beigi \cite[Theorem 5]{Beigi} for $\a>0$.

\begin{proposition}
$D_\a(\r\|\s)\ge 0$ for all $\a\in\R\backslash\{0\}$ and $\r, \s\in\S$. 
Moreover, the equality holds if and only if $\r=\s$.
\end{proposition}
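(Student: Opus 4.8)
The plan is to reduce the assertion to an inequality for
$Q_\a(\r\|\s):=\Tr\left(\s^{\frac{1-\a}{2\a}}\r\,\s^{\frac{1-\a}{2\a}}\right)^{\a}$
and then settle it with Hölder's inequality in its usual and reverse forms. Since $D_\a(\r\|\s)=\frac{1}{\a(\a-1)}\log Q_\a(\r\|\s)$, the claim $D_\a(\r\|\s)\ge 0$ is equivalent to $Q_\a(\r\|\s)\ge 1$ when $\a(\a-1)>0$ (that is, $\a<0$ or $\a>1$), and to $Q_\a(\r\|\s)\le 1$ when $0<\a<1$. The value $\a=1$, where $D_1$ is the Umegaki relative entropy, is classical (Klein's inequality), so I would handle $\a\in\R\setminus\{0,1\}$ in the main argument. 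One may instead invoke the known nonnegativity of $\tilde D_\a$ from the references cited above for $\a>0$ and run the new argument only for $\a<0$; but a single Hölder computation in fact covers all cases uniformly.

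Set $b:=\frac{1-\a}{2\a}$ and $T:=\s^{b}\r\,\s^{b}>0$, so that $Q_\a(\r\|\s)=\Tr T^{\a}$ and, by cyclicity of the trace, $\Tr\r=\Tr(\s^{-2b}T)$. The key observation is that $p:=\frac{\a}{\a-1}$ and $q:=\a$ form a conjugate pair, $\frac{1}{p}+\frac{1}{q}=1$, and that $(\s^{-2b})^{p}=\s$ because $-2bp=1$. Applying the matrix Hölder inequality to $\Tr(\s^{-2b}T)$ with these exponents: for $\a>1$ one has $p,q>1$ and the usual inequality gives $1=\Tr\r\le(\Tr\s)^{1/p}(\Tr T^{\a})^{1/\a}=Q_\a(\r\|\s)^{1/\a}$; for $0<\a<1$ one has $q\in(0,1)$ and $p<0$, and the reverse Hölder inequality gives $1\ge Q_\a(\r\|\s)^{1/\a}$; for $\a<0$ one has $q<0$ and $p\in(0,1)$, and again reverse Hölder gives $1\ge Q_\a(\r\|\s)^{1/\a}$. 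Raising to the power $\a$, which preserves the inequality for $\a>0$ and reverses it for $\a<0$, yields exactly $Q_\a(\r\|\s)\ge 1$ for $\a<0$ and $\a>1$, and $Q_\a(\r\|\s)\le 1$ for $0<\a<1$; in every case this is $D_\a(\r\|\s)\ge 0$.

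For the equality clause, if $\r=\s$ then $T=\r^{1/\a}$, whence $Q_\a(\r\|\s)=\Tr\r=1$ and $D_\a(\r\|\s)=0$. Conversely, equality in the (reverse) Hölder step forces $(\s^{-2b})^{p}=\s$ and $T^{\a}=(\s^{b}\r\s^{b})^{\a}$ to be proportional, i.e. $\s^{b}\r\s^{b}=c'\,\s^{1/\a}$ for some $c'>0$; using $\frac{1}{\a}-2b=1$ this gives $\r=c'\s$, and taking traces forces $c'=1$, so $\r=\s$.

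The step that needs care is the bookkeeping of signs, since the direction of the final inequality is the product of three independent sign choices: whether Hölder applies in its usual or reverse form (governed by the signs of $p$ and $q$), whether raising to the power $\a$ preserves or reverses inequalities (the sign of $\a$), and whether $D_\a$ has the same sign as $\log Q_\a$ (the sign of $\a(\a-1)$). A secondary point is simply to have available the matrix reverse Hölder inequality $\Tr(AB)\ge(\Tr A^{p})^{1/p}(\Tr B^{q})^{1/q}$ for positive definite $A,B$ and a conjugate pair with one exponent in $(0,1)$ and the other negative, together with its equality case; everything else is routine algebra with exponents.
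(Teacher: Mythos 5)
Your argument is correct in its conclusion and takes a genuinely different route from the paper. The paper reduces everything to the commutative case by pinching with respect to $\s$: trace--majorization $\l(\E_\s(A))\prec\l(A)$ together with convexity of $t\mapsto t^\a$ gives $D_\a(\r\|\s)\ge D_\a(\E_\s(\r)\|\s)\ge 0$, and the equality case is settled by Lemmas \ref{lem:convex} and \ref{lem:pinching} (strict convexity forces equal spectra, which under pinching forces $\r=\E_\s(\r)=\s$). You instead dualize: with $T:=\s^{\frac{1-\a}{2\a}}\r\,\s^{\frac{1-\a}{2\a}}$ you write $1=\Tr\r=\Tr\bigl(\s^{\frac{\a-1}{\a}}T\bigr)$ and apply trace H\"older with the conjugate pair $p=\frac{\a}{\a-1}$, $q=\a$, using $\bigl(\s^{\frac{\a-1}{\a}}\bigr)^p=\s$; your sign bookkeeping in the three regimes is right, and for $\a>1$ this is the completely standard argument, with the usual equality condition (proportionality of the $p$-th and $q$-th powers) giving $\r=\s$ exactly as you say.

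The one substantive point you leave as a black box is the tool that does all the work for $\a\in(0,1)$ and $\a<0$: the reverse trace-H\"older inequality $\Tr(AB)\ge(\Tr A^p)^{1/p}(\Tr B^q)^{1/q}$ for positive definite $A,B$ with one exponent in $(0,1)$ and the other negative, \emph{together with} its equality case $A^p\propto B^q$. This statement is true, but it is not off-the-shelf textbook material in the matrix setting, and the equality case is precisely what your ``only if'' direction rests on, so it needs a proof or a precise citation. A clean way to supply it: use the minimum form of von Neumann's trace inequality, $\Tr(AB)\ge\sum_i\l_i^{\downarrow}(A)\,\l_i^{\uparrow}(B)$, and then the scalar reverse H\"older inequality on the eigenvalue sequences; tracking equality in both steps (a common eigenbasis with opposite ordering, then proportionality of the powered eigenvalue sequences) yields $A^p\propto B^q$. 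Note that this repair is itself a spectral/majorization reduction to the commutative case, i.e.\ it carries essentially the same weight as the paper's Lemmas \ref{lem:convex} and \ref{lem:pinching}; also beware that some tempting shortcuts fail (for instance, the Araki--Lieb--Thirring inequality for exponents in $(0,1)$ goes in the wrong direction here). With that lemma established, your proof is complete, uniform in $\a$, and a nice alternative to the pinching argument.
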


\begin{proof}
Since the case $\a>0$ has been treated in \cite{Beigi}, we shall concentrate on the case $\a<0$; 
however, we note that our method presented here is also applicable to the case $\a>0$. 

Let $\s=\sum_i s_i E_i$ be the spectral decomposition, where $\{s_i\}_i$ are distinct eigenvalues of $\s$ and $\{E_i\}_i$ are the corresponding projection operators. 
The {\em pinching} operation $\E_\s:\L(\H)\to\L(\H)$ associated with the state $\s$ is defined by
\[
  \E_\s(A):=\sum_i E_i A E_i.
\]
The pinching $\E_\s$ sends a state $\r$ to a state $\E_\s(\r)$ that commutes with $\s$. 
To prove the first part of the claim, it suffices to show that $D_\a(\r\|\s)\ge D_\a(\E_\s(\r)\|\s)$ for all $\a\in\R\backslash\{0\}$, since the positivity $D_\a(\E_\s(\r)\|\s)\ge 0$ is well known in classical information theory. 
Due to \cite[Problem II.5.5]{Bhatia}, for any Hermitian matrix $A$, the vector $\l(\E_\s(A))$ comprising eigenvalues of $\E_\s(A)$ is majorised by the vector $\l(A)$ comprising eigenvalues of $A$; in symbol, $\l(\E_\s(A))\prec \l(A)$. 
It follows that $\Tr f(\E_\s(A))\le \Tr f(A)$ for any convex function $f$ (cf., (\ref{eqn:convex2}) below). 
Applying this result to $A=\s^{\frac{1-\a}{2\a}}\r\,\s^{\frac{1-\a}{2\a}}$ and $f(t)=t^\a$ with $t>0$, which is convex for $\a<0$, we have
\[
 \Tr \left(\E_\s(\s^{\frac{1-\a}{2\a}}\r\,\s^{\frac{1-\a}{2\a}})\right)^{\a}
 \le 
 \Tr \left(\s^{\frac{1-\a}{2\a}}\,\r\,\s^{\frac{1-\a}{2\a}}\right)^\a.
\]
Taking the logarithm of both sides, dividing them by $\a(\a-1)$, which is positive for $\a<0$, and noting that $\E_\s(\s^{\frac{1-\a}{2\a}}\r\,\s^{\frac{1-\a}{2\a}})=\s^{\frac{1-\a}{2\a}} \E_\s(\r)\,\s^{\frac{1-\a}{2\a}}$, we have $D_\a(\E_\s(\r)\|\s)\le D_\a(\r\|\s)$. 

Let us proceed to the second part. 
The `if' part is obvious. We show the `only if' part.
Since $D_\a(\r\|\s)\ge D_\a(\E_\s(\r)\|\s)\ge 0$,  the condition $D_\a(\r\|\s)=0$ leads to a series of equalities $D_\a(\r\|\s)=D_\a(\E_\s(\r)\|\s)$ and $D_\a(\E_\s(\r)\|\s)=0$. 
The latter equality implies that $\E_\s(\r)=\s$.
The former equality is equivalent to 
\[
 \Tr f(A)=\Tr f(\E_\s(A)),
\]
where $A=\s^{\frac{1-\a}{2\a}}\r\,\s^{\frac{1-\a}{2\a}}$ and $f(t)=t^\a$.
Since 
$f(t)$ is strictly convex in $t>0$ for $\a<0$, Lemma \ref{lem:convex} below shows that $\l^\downarrow(A)=\l^\downarrow(\E_\s(A))$. 
Here $\l^\downarrow(A)$ denotes the vector comprising eigenvalues of $A$ arranged in the decreasing order. 
It then follows from Lemma \ref{lem:pinching} below that $A=\E_\s(A)$, or equivalently, $\r=\E_\s(\r)$. 
Putting these results together, we have $\r=\E_\s(\r)=\s$. 
\end{proof}

\begin{lemma}\label{lem:convex}
Let $A$ and $B$ be strictly positive $n\times n$ Hermitian matrices, and let $f:\R_{++}\to\R$ be a strictly convex function. 
If $\l(A)\succ\l(B)$ and $\Tr f(A)=\Tr f(B)$, then $\l^\downarrow(A)=\l^\downarrow(B)$. 
\end{lemma}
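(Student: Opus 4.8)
The plan is to exploit the classical characterization of majorization by doubly stochastic matrices together with the strict convexity of $f$. Let $\l^\downarrow(A)=(a_1,\dots,a_n)$ and $\l^\downarrow(B)=(b_1,\dots,b_n)$ denote the eigenvalue vectors arranged in decreasing order. By the Hardy--Littlewood--P\'olya theorem, the hypothesis $\l(A)\succ\l(B)$ means there is a doubly stochastic matrix $P=(p_{ij})$ with $b_i=\sum_j p_{ij}a_j$ for every $i$. The idea is to compare $\Tr f(A)=\sum_j f(a_j)$ with $\Tr f(B)=\sum_i f(b_i)$ by applying Jensen's inequality rowwise: for each $i$,
\[
 f(b_i)=f\!\left(\sum_j p_{ij}a_j\right)\le \sum_j p_{ij}f(a_j),
\]
and summing over $i$, using $\sum_i p_{ij}=1$, yields $\Tr f(B)\le\Tr f(A)$. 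This is the inequality referred to as (\ref{eqn:convex2}) in the excerpt.

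The heart of the argument is the equality case. Since we assume $\Tr f(A)=\Tr f(B)$, every rowwise Jensen inequality above must be an equality. By strict convexity of $f$, the equality $f\!\left(\sum_j p_{ij}a_j\right)=\sum_j p_{ij}f(a_j)$ forces, for each fixed $i$, that all the values $a_j$ with $p_{ij}>0$ coincide; hence $b_i$ equals that common value. In other words, each $b_i$ is equal to some $a_j$, so $\l^\downarrow(B)$ is a rearrangement of a multiset drawn from the entries of $\l^\downarrow(A)$. First I would note that this already gives $\{b_i\}\subseteq\{a_j\}$ as multisets in a weak sense; to upgrade this to $\l^\downarrow(A)=\l^\downarrow(B)$ I would use that $P$ is \emph{doubly} stochastic. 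Indeed, if $a_1=\cdots=a_{k}>a_{k+1}$ is the top block of equal largest eigenvalues of $A$, then any row $i$ with $b_i=a_1$ must be supported only on columns $j\le k$; counting, at most $k$ rows can have $b_i=a_1$ because the columns $j\le k$ carry total mass $k$ and each contributing row consumes mass $1$ on those columns. Conversely, since $b_i\le a_1$ for all $i$ and $\sum_i b_i=\sum_j a_j$ (the trace is preserved, as $\Tr A=\Tr B$ by stochasticity), a short accounting shows exactly $k$ rows attain the value $a_1$. Iterating this block-by-block down the spectrum shows $b_i=a_i$ for all $i$, i.e., $\l^\downarrow(A)=\l^\downarrow(B)$.

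The main obstacle I anticipate is precisely this last bookkeeping step: turning the pointwise conclusion ``each $b_i$ equals some $a_j$'' into the global conclusion that the two ordered vectors agree. One must rule out, for example, that $B$ simply repeats the largest eigenvalue of $A$ too often while dropping a smaller one. The cleanest route is probably to avoid the block induction and instead argue directly: since $\l^\downarrow(B)$ is still majorized by $\l^\downarrow(A)$ and now takes values in the set $\{a_1,\dots,a_n\}$, and since $\sum b_i=\sum a_j$, the only such vector that is majorized by $\l^\downarrow(A)$ and dominates it in the sense required by the equality case is $\l^\downarrow(A)$ itself; alternatively, one invokes that $B=PA$ with $P$ doubly stochastic and $B$ having the same ordered spectrum-support forces $P$ to be a permutation on the relevant blocks. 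I would present whichever of these is shortest, but the combinatorial core is the same and is where care is needed.
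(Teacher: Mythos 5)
Your proposal follows essentially the same route as the paper's proof: the Hardy--Littlewood--P\'olya doubly stochastic matrix, rowwise Jensen, strict convexity forcing each row to be supported on a single block of equal eigenvalues of $A$ (so each $b_i$ equals the value of that block), and then a counting argument based on double stochasticity to match multiplicities. The one step whose stated justification does not hold up is the claim that ``since $b_i\le a_1$ for all $i$ and $\sum_i b_i=\sum_j a_j$, a short accounting shows exactly $k$ rows attain the value $a_1$'': those two facts (even together with each $b_i$ lying among the $a_j$ and the upper bound of $k$) do not determine the count. For instance, for $a=(3,2,2,1)$ the vector $b=(2,2,2,2)$ is majorized by $a$, has entries among the $a_j$, satisfies $b_i\le a_1$ and has the same sum, yet no entry equals $a_1$; such a $b$ is of course excluded by the hypotheses, but not by the facts you invoke in that sentence. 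The correct accounting is the column-mass argument you already used for the upper bound, run in both directions: rows with $b_i<a_1$ have support disjoint from the columns $j\le k$, so the entire column mass $k$ of those columns must be supplied by rows with $b_i=a_1$, each of which deposits exactly mass $1$ there; hence exactly $k$ rows take the value $a_1$, and after deleting these rows and columns the remaining submatrix is again doubly stochastic, so your block-by-block induction goes through. The paper packages the same bookkeeping slightly differently: for each distinct eigenvalue $\hat\mu_\beta$ of $B$ with corresponding block $J_\beta$ it shows $|J_\beta|\,\hat\mu_\beta=|I_\alpha|\,\hat\lambda_\alpha$ for the matching block of $A$ (using that the relevant columns carry all their mass in the rows $J_\beta$), deduces $|J_\beta|=|I_\alpha|$, and gets surjectivity of the block correspondence by a cardinality count. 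So the approach is the same; you should just replace the sum-based justification of ``exactly $k$'' by the column-mass count (or by the paper's identity), after which the argument is complete.
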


\begin{proof}
Let us denote the eigenvalues of $A$ and $B$ explicitly as follows: 
\[ \l^\downarrow(A)=(\l_1,\l_2,\dots,\l_n), \qquad \l^\downarrow(B)=(\m_1,\m_2,\dots,\m_n), \]
where $\l_1\ge\l_2\ge\cdots\ge\l_n >0$ and $\m_1\ge\m_2\ge\cdots\ge\m_n >0$. 
Furthermore, let $(\hat\l_1,\hat\l_2,\dots, \hat\l_r)$ and $(\hat\m_1,\hat\m_2,\dots, \hat\m_s)$ be the lists of distinct eigenvalues of $A$ and $B$, respectively, labeled in the decreasing order, so that $\hat\l_1>\hat\l_2>\dots>\hat\l_r$ and 
$\hat\m_1>\hat\m_2>\dots>\hat\m_s$. 
In order to handle the multiplicity of eigenvalues, we introduce
the subsets 
\[
 I_\a:=\{i\,|\, \l_i=\hat\l_\a\},\qquad J_\b:=\{j\,|\, \m_j=\hat\m_\b\}
\]
of indices, each corresponding to distinct eigenvalue $\hat\l_\a$ or $\hat\m_\b$.
Then the set $\{1,2,\dots,n\}$ is decomposed into disjoint unions of $\{I_\a\}_{1\le\a\le r}$ and \{$J_\b\}_{1\le\b\le s}$ as follows:
\[
 \{1,2,\dots,n\}=\bigsqcup_{\a=1}^r I_\a=\bigsqcup_{\b=1}^s J_\b.
\]
We shall show that $r=s$, and that $\hat\l_\a=\hat\m_\a$ and $I_\a=J_\a$ for each $\a=1,\dots,r$. 

Since $\l(A)\succ\l(B)$, there is a doubly stochastic  $n\times n$ matrix $Q=[Q_{ji}]$ that satisfies
\begin{equation}\label{eqn:convex0}
 \m_j=\sum_{i=1}^n Q_{ji} \l_i,\qquad (j=1,\dots,n).
\end{equation}
Thus, for a convex function $f$, 
\begin{equation}\label{eqn:convex1}
 f(\m_j)=f\left(\sum_{i=1}^n Q_{ji} \l_i \right)\le \sum_{i=1}^n Q_{ji} f(\l_i)
\end{equation}
and 
\begin{equation}\label{eqn:convex2}
 \sum_{j=1}^n  f(\m_j) \le \sum_{i=1}^n  f(\l_i).
\end{equation}
If there is some $j$ for which the inequality (\ref{eqn:convex1}) becomes strict, then the inequality (\ref{eqn:convex2}) also becomes strict. 
Thus the condition $\Tr f(B)=\Tr f(A)$, which amounts to demanding equality in (\ref{eqn:convex2}), leads us to equality in  (\ref{eqn:convex1}) for all $j=1,\dots,n$.
Since $f$ is strictly convex, equality in (\ref{eqn:convex1}) holds if and only if there is an $\a\in\{1,\dots,r\}$ such that the support set $\{i\,|\, Q_{ji}>0\}$ is a subset of $I_\a$, i.e., 
\begin{equation}\label{eqn:convex3}
  \l_i=\hat\l_\a \quad\mbox{for all}\quad i \in \{k\,|\, Q_{jk}>0\}.
\end{equation}
Combining (\ref{eqn:convex3}) with (\ref{eqn:convex0}), we also have
\begin{equation}\label{eqn:convex4}
 \m_j=\hat\l_\a.
\end{equation}
Put differently, for each $j\in\{1,\dots,n\}$, there is a unique $\a$ that satisfies (\ref{eqn:convex3}) and (\ref{eqn:convex4}), and this correspondence defines a map $\Gamma: j\mapsto \a$. 

Given $\b\in\{1,\dots,s\}$, let us choose $j_1,j_2\in J_\b$ arbitrarily. Then we see from (\ref{eqn:convex4}) that
\[
 \hat\l_{\Gamma(j_1)}=\m_{j_1}=\hat\m_\b=\m_{j_2}=\hat\l_{\Gamma(j_2)}. 
\]
Consequently, $\Gamma(j_1)=\Gamma(j_2)$ for any $j_1,j_2\in J_\b$ and $\b\in\{1,\dots,s\}$.
This implies that $\Gamma$ naturally induces an injective 
map $\tilde\Gamma: \b\mapsto\a$ for which $\hat\m_\b=\hat\l_\a$. 
In particular, we must have $s\le r$.

Now, the above construction shows that
$Q_{ji}>0$ only if $(j,i)\in J_\b\times I_\a$ with $\a=\tilde\Gamma(\b)$. 
As a consequence, for any pair $(\a,\b)$ satisfying $\a=\tilde\Gamma(\b)$, 
\[
 |J_\b| \cdot \hat \m_\b
 =\sum_{j\in J_\b} \m_j 
 =\sum_{j\in J_\b} \left(\sum_{i\in I_\a} Q_{ji}\l_i\right)
 =\sum_{i\in I_\a}\l_i
 =|I_\a| \cdot \hat \l_\a.
\]
Since $\hat\m_\b=\hat\l_\a>0$, we have $|J_\b|=|I_\a|$.
This relation further concludes that $\tilde\Gamma$ is surjective; since otherwise $s<r$ from the injectivity of $\Gamma$, and
\[
 n=\sum_{\b=1}^s |J_\b|=\sum_{\b=1}^s |I_{\tilde\Gamma(\b)}|<\sum_{\a=1}^r |I_\a|=n,
\]
which is a contradiction. 

In summary, $\tilde\Gamma$ is bijective (in fact, the identity map), $s=r$, 
and  $\hat\m_\b=\hat\l_\b$ and $J_\b=I_\b$ for each $\b=1,\dots,s\, (=r)$. 
Consequently, we have $\l^\downarrow(A)=\l^\downarrow(B)$. 
\end{proof}

\begin{lemma}\label{lem:pinching}
Let $\E_\s$ be the pinching operation associated with a state $\s\in\S(\C^n)$.
For an $n\times n$ Hermitian matrix $A$, the following conditions are equivalent.
\begin{itemize}
\item[{\rm (i)}] $\l^\downarrow(A)=\l^\downarrow(\E_\s(A))$.
\item[{\rm (ii)}] $A=\E_\s(A)$.
\end{itemize}
\end{lemma}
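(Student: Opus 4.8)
The implication (ii)$\Rightarrow$(i) is immediate. The substance is (i)$\Rightarrow$(ii), and the plan is to exploit the fact that $\E_\s$ is an orthogonal projection on the space of $n\times n$ matrices equipped with the Hilbert--Schmidt inner product $\langle X,Y\rangle:=\Tr(X^*Y)$. Concretely, writing $\s=\sum_i s_iE_i$ with $\{E_i\}_i$ the spectral projections, one checks from $E_iE_j=\d_{ij}E_i$ and $\sum_iE_i=I$ that $\E_\s$ is idempotent ($\E_\s\circ\E_\s=\E_\s$) and self-adjoint with respect to $\langle\cdot,\cdot\rangle$; hence it is an orthogonal projection, and the Pythagorean identity gives, for every Hermitian $A$,
\[
 \Tr A^2=\|A\|_2^2=\|\E_\s(A)\|_2^2+\|A-\E_\s(A)\|_2^2=\Tr \E_\s(A)^2+\Tr\bigl(A-\E_\s(A)\bigr)^2 .
\]

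The second step is to observe that $\Tr X^2=\sum_k\l_k(X)^2$ for a Hermitian matrix $X$ depends only on the eigenvalue list $\l^\downarrow(X)$. Therefore condition (i), namely $\l^\downarrow(A)=\l^\downarrow(\E_\s(A))$, forces $\Tr A^2=\Tr \E_\s(A)^2$, and the displayed identity then yields $\Tr\bigl(A-\E_\s(A)\bigr)^2=0$. Since $A-\E_\s(A)$ is Hermitian, this means $A-\E_\s(A)=0$, which is (ii).

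If one prefers an argument avoiding the projection language, the same conclusion follows by expanding $\Tr A^2-\Tr\E_\s(A)^2=\sum_{i\neq j}\Tr(E_iAE_jAE_i)=\sum_{i\neq j}\|E_iAE_j\|_2^2\ge 0$, with equality exactly when all off-diagonal blocks $E_iAE_j$ ($i\neq j$) vanish, i.e.\ $A=\E_\s(A)$. I do not anticipate a genuine obstacle here; the only point requiring minor care is the verification that $\E_\s$ is self-adjoint and idempotent on the Hilbert--Schmidt space (equivalently, that the block-decomposition cross terms are nonnegative), which is a short computation using $E_iE_j=\d_{ij}E_i$.
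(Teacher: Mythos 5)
Your proof is correct, and it takes a different route from the paper's. The paper proves (i)$\Rightarrow$(ii) by comparing characteristic polynomials: condition (i) forces $\varphi_A=\varphi_{\E_\s(A)}$, and the coefficient of $x^{n-2}$ in the difference is $-\sum_{(i,j)\in\Delta}|a_{ij}|^2$, where $\Delta$ indexes the off-diagonal entries annihilated by the pinching, so all such entries vanish. Your argument replaces this entry-wise bookkeeping with the observation that $\E_\s$ is a Hilbert--Schmidt orthogonal projection, so the Pythagorean identity $\Tr A^2=\Tr\E_\s(A)^2+\Tr\bigl(A-\E_\s(A)\bigr)^2$ together with the fact that $\Tr X^2$ is a spectral invariant finishes the proof; your second, expansion-based variant $\Tr A^2-\Tr\E_\s(A)^2=\sum_{i\neq j}\Tr\bigl(E_iAE_jAE_i\bigr)\ge 0$ is the block form of the same inequality. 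The two proofs rest on the same underlying mechanism (a quadratic spectral invariant detects the off-diagonal blocks killed by $\E_\s$: the paper's $x^{n-2}$ coefficient is the second elementary symmetric polynomial, which with the preserved trace determines $\Tr X^2$), but yours needs only the single scalar $\Tr X^2$ rather than the full characteristic polynomial, is shorter, and generalizes readily to any self-adjoint idempotent trace-preserving map in place of a pinching; the paper's version has the modest advantage of staying entirely at the level of matrix entries and determinant expansions. The only step you flag as needing care --- self-adjointness and idempotence of $\E_\s$ with respect to $\Tr(X^*Y)$ --- is indeed a one-line computation from $E_iE_j=\d_{ij}E_i$ and $\sum_iE_i=I$, so there is no gap.
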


\begin{proof}
(ii) $\Rightarrow$ (i) is trivial. We show (i) $\Rightarrow$ (ii). 
The characteristic polynomial $\varphi_A(x):=\det(x I-A)$ of $A=[a_{ij}]$ is expanded in $x$ as
\[
 \varphi_A(x)=x^n-x^{n-1}\left(\sum_i a_{ii}\right)+x^{n-2}\left(\sum_{i<j} (a_{ii}a_{jj}-|a_{ij}|^2)\right)+\cdots+(-1)^n\det A.
\]
Therefore
\begin{equation}\label{eqn:pinching}
 \varphi_A(x)-\varphi_{\E_\s(A)}(x)
 =x^{n-2}\left(-\sum_{(i,j)\in\Delta} |a_{ij}|^2\right)+\cdots,
\end{equation}
where
\[
 \Delta:=\{(i,j)\,|\,\mbox{$i<j$, and the element $a_{ij}$ is forced to be zero by the pinching $\E_\s$} \}.
\]
Since (i) implies $\varphi_A(x)=\varphi_{\E_\s(A)}(x)$, the coefficient of $x^{n-2}$ in (\ref{eqn:pinching}) must vanish. 
As a consequence, we have $a_{ij}=0$ for all $(i,j)\in\Delta$, proving that $A=\E_\s(A)$. 
\end{proof}

\section{Differential calculus}\label{app:Gateaux}

In calculating directional derivatives of functions on $\L(\H)$, knowledge about the G\^ateaux derivative is useful \cite{Bhatia}. 
Let $\X$ and $\Y$ be real Banach spaces and let $U$ be an open subset of $\X$. 
A continuous map $f: U \rightarrow \Y$ is said to be {\em G\^ateaux differentiable} at $x\in U$ if, for every $v\in \X$, the limit 
\[
 Df(x)[v]:=\lim_{t \to 0} \frac{f(x+tv)-f(u)}{t}
\]
exists in $\Y$. 
The quantity $Df(x)[v]$ is called the {\em G\^ateaux derivative} of $f$ at $x$ in the direction $v$. 
If $f$ is G\^ateaux differentiable at every point of $U$, we say that $f$ is G\^ateaux differentiable on $U$. 
The following basic properties of the G\^ateaux derivative are useful in applications.  

\begin{itemize}
\item[(I)] (Chain rule) 
If two maps $f : U \rightarrow \Y$ and $g:\Y \rightarrow \Z$ are G\^ateaux differentiable,
then their composition $g\circ f$ is also G\^ateaux differentiable, and
\[
D(g \circ f)(x)[v]=Dg(f(x))[Df(x)[v]] 
\]
holds for all $x\in U$ and $v \in \X$.

\item[(II)] (Product rule) 
Let $\t$ be a bilinear map from the product of two Banach spaces $\Y_1$ and $\Y_2$ into $\Z$.
If two maps $f: U \rightarrow \Y_1$ and $g: U \rightarrow \Y_2$ are G\^ateaux differentiable, then $\t(f,g)(x):= \t(f(x),g(x))$ is also G\^ateaux differentiable, and 
\[
D(\t(f,g))(x)[v]=\t(Df(x)[v],g(x))+\t(f(x),Dg(x)[v])
\]
holds for all $x\in U$ and $v \in \X$.
As a special case, let $\Y_1=\Y_2:=\L(\H)$ and let $\t$ be the usual product of two operators denoted by $\cdot$. 
Then we obtain
\[
D(f\cdot g)(x)[v]=Df(x)[v] \cdot g(x) + f(x)\cdot Dg(x)[v]
\]
for all $x\in U$ and $v \in \X$.
\end{itemize}

Now we derive some formulae for the G\^ateaux derivative. 
Firstly, let $f(A)=e^A$ with $A\in\L_{\rm sa}(\H)$. Then 
\begin{equation}\label{eqn:GateauxEx1}
Df(A)[B]=\int_0 ^1 e^{(1-t)A}Be^{tA}dt
\end{equation}
for all $A, B\in \L_{\rm sa}(\H)$. 
In fact, integrating the identity
\[
 \frac{d}{dt} \left\{e^{-tA}e^{t(A+B)} \right\}=e^{-tA}Be^{t(A+B)}
\]
and operating $e^{A}$ from the left, we have the {\em Dyson expansion}: 
\[
e^{A+B}-e^{A}=\int_0 ^1 e^{(1-t)A}Be^{t(A+B)}dt.
\]
Replacing $B$ in the above formula with $uB$, where $u\in\R$, we obtain
\begin{eqnarray*}
\lim_{u \to 0}\frac{e^{(A+uB)}-e^A}{u}
&=&\lim_{u \to 0}\int_0^1 e^{(1-t)A}Be^{t(A+uB)}dt \\
&=&\int_0 ^1 e^{(1-t)A}Be^{tA}dt.
\end{eqnarray*}

Secondly, let $f(A)=\log A$ with $A\in\L_{++}(\H)$. Then 
\begin{equation}\label{eqn:GateauxEx2}
Df(A)[B]=\int_0^\infty (s I +A)^{-1}B(s I+A)^{-1}ds
\end{equation}
for all $A\in\L_{++}(\H)$ and $B\in \L_{\rm sa}(\H)$. 
In fact, using the integral representation
\[
\log x=\int_0^{\infty}\left( \frac{1}{s+1}-\frac{1}{s+x} \right) ds, 
\]
we obtain
\begin{eqnarray*}
&&\lim_{u \to 0} 
\frac{\mathrm{log}(A+uB)-\mathrm{log}A}{u} \\
&&\qquad=
\lim_{u \to 0} \int_0^\infty 
\frac{(s+1)^{-1}I-(s I+A+uB)^{-1}-(s+1)^{-1}I+(s I+A)^{-1}}{u}ds \\
&&\qquad=
\lim_{u \to 0} \int_0^\infty 
\frac{(s I+A)^{-1}-(s I+A+uB)^{-1}}{u}ds \\
&&\qquad=
\int_0^\infty 
(s I+A)^{-1}B(s I+A)^{-1}ds.
\end{eqnarray*}
In the last equality, we used the resolvent identity: 
\[
(s I+Q)^{-1}-(s I+P)^{-1}=(s I+P)^{-1} (P-Q) (s I+Q)^{-1}.
\]

Finally, let $f(A)=A^\l$ with $\l\in\R$ and $A\in\L_{++}(\H)$. 
Then
\begin{equation}\label{eqn:GateauxEx3}
Df(A)[B]=\l \int_0^1 dt \int_0^\infty ds  \, A^{(1-t)\l}(sI+A)^{-1}B(sI+A)^{-1} A^{t\l}
\end{equation}
for all $A\in\L_{++}(\H)$ and $B\in \L_{\rm sa}(\H)$. 
In fact, since $f(A)=e^{\l\log A}=h(g(A))$, where $g(x)=\l\log x$ and $h(x)=e^x$, the chain rule yields
\begin{eqnarray*}
Df(A)[B]
&=&Dh(g(A))[Dg(A)[B]] \\
&=&Dh(g(A))\left[\l\int_0^\infty (s I +A)^{-1}B(s I+A)^{-1}ds \right] \\
&=&\int_0^1 e^{(1-t) (\l \log A)} \left[\l\int_0^\infty (s I +A)^{-1}B(s I+A)^{-1}ds \right] e^{t (\l \log A)}\,dt.
\end{eqnarray*}
In the second and the third equalities, we used (\ref{eqn:GateauxEx2}) and (\ref{eqn:GateauxEx1}), respectively.

\section{Computation of affine connections} \label{app:connection}

Let us derive the formulae (\ref{eqn:nabla}) and (\ref{eqn:nablaStar}) for the affine connections $\nabla^{(D_\alpha)}$ and $\nabla^{(D_\alpha)*}$. 
Since
\begin{eqnarray*}
 \left. \phantom{\prod\!\!\!\!\!\!\!\!\!} Xg^{(D_\a)}(Y,Z)\right|_{\r}
 &\!\!=\!\!&
 X_\r\left\{\left. \phantom{\prod\!\!\!\!\!\!\!\!\!} D_\a((YZ)_\r\|\s)\right|_{\s=\r}\right\} \\
 &\!\!=\!\!&
 \left. \phantom{\prod\!\!\!\!\!\!\!\!\!} D_\a((XYZ)_\r\|\s)\right|_{\s=\r}
 + \left. \phantom{\prod\!\!\!\!\!\!\!\!\!} D_\a((YZ)_\r\|(X)_\s)\right|_{\s=\r} \\
 &\!\!=\!\!&
  \left. \phantom{\prod\!\!\!\!\!\!\!\!\!} D_\a((XYZ)_\r\|\s)\right|_{\s=\r}
 - g_\r^{(D_\a)}(\nabla^{(D_\a)}_YZ,X),
\end{eqnarray*}
we have
\begin{equation}\label{eqn:nablaApp}
g_\r^{(D_\a)}(\nabla^{(D_\a)}_XY,Z)=
\left. \phantom{\prod\!\!\!\!\!\!\!\!\!} D_\a((ZXY)_\r\|\s)\right|_{\s=\r} 
- \left. \phantom{\prod\!\!\!\!\!\!\!\!\!} Z g^{(D_\alpha)}(X,Y)\right|_{\r}. 
\end{equation}
On the other hand, due to the duality (\ref{eqn:duality}), we have
\begin{eqnarray}
 g_\r^{(D_\a)}(\nabla^{(D_\a)*}_XY,Z)
 &\!\!=\!\!&
 \left. \phantom{\prod\!\!\!\!\!\!\!\!\!} Xg^{(D_\a)}(Y,Z)\right|_{\r}
 -g_\r^{(D_\a)}(\nabla^{(D_\a)}_XZ,Y)  \nonumber \\
 &\!\!=\!\!&
  \left. \phantom{\prod\!\!\!\!\!\!\!\!\!} Xg^{(D_\a)}(Y,Z)\right|_{\r}
 +\left. \phantom{\prod\!\!\!\!\!\!\!\!\!} Y g^{(D_\alpha)}(X,Z)\right|_{\r}
 -\left. \phantom{\prod\!\!\!\!\!\!\!\!\!} D_\a((YXZ)_\r\|\s)\right|_{\s=\r}. 
 \label{eqn:nablaStarApp}
\end{eqnarray}
Equations (\ref{eqn:nablaApp}) and (\ref{eqn:nablaStarApp}) imply that computing the affine connections $\nabla^{(D_\alpha)}$ and $\nabla^{(D_\alpha)*}$ is reduced to computing $D_\a((XYZ)_\r\|\s)|_{\s=\r}$ and $X g^{(D_\a)}(Y,Z)$.

We first compute $D_\a((XYZ)_\r\|\s)|_{\s=\r}$. 
We see from (\ref{eqn:YXD}) that
\begin{eqnarray*} 
D_\a((YZ)_\r\|\s)
&\!\!=\!\!&
 \frac{1}{\a -1}
 \frac{\Tr \!\left\{(YA^{\a-1}) B_Z\right\}+\Tr\!\left\{A^{\a-1} C_{YZ}\right\}}{\Tr A^\a} \\
 &&
 -\frac{\a}{\a-1}
 \frac{\left(\Tr \!\left\{A^{\a-1} B_Z\right\}\right) \left(\Tr \!\left\{A^{\a-1} B_Y\right\}\right)}{(\Tr A^\a)^2}.
\end{eqnarray*} 
Therefore
\begin{eqnarray*}
&&{\!\!\!\!\!\!\!\! D_\a((XYZ)_\r\|\s)}\\
&&{\!\!\!\!= 
\frac{1}{\a -1}
 \frac{\Tr \!\left\{(XYA^{\a-1}) B_Z\right\}+\Tr \!\left\{(YA^{\a-1})
(XB_Z)\right\}}{\Tr A^\a}}\\
&&
+\frac{1}{\a -1}
 \frac{\Tr\!\left\{(XA^{\a-1}) C_{YZ}\right\}+\Tr\!\left\{A^{\a-1} (XC_{YZ})\right\} }{\Tr A^\a}\\
&&
-\frac{1}{\a -1}
 \frac{\left(\Tr \!\left\{(YA^{\a-1}) B_Z\right\}+\Tr\!\left\{A^{\a-1} C_{YZ}\right\} \right)\Tr\{\a A^{\a-1}B_X\}}{(\Tr A^\a)^2}\\
 &&
-\frac{\a}{\a-1}
 \frac{\left(\Tr X\!\left\{A^{\a-1} B_Z\right\}\right) \left(\Tr \!\left\{A^{\a-1} B_Y\right\}\right)+\left(\Tr \!\left\{A^{\a-1} B_Z\right\}\right) \left(\Tr X\!\left\{A^{\a-1} B_Y\right\}\right)}{(\Tr A^\a)^2}\\
&&
+\frac{\a}{\a-1}
 \frac{\left(\Tr \!\left\{A^{\a-1} B_Z\right\}\right) \left(\Tr \!\left\{A^{\a-1} B_Y\right\}\right)
 \cdot 2\, \Tr\left\{\a A^{\a-1}B_X\right\}}{(\Tr A^\a)^3}.
\end{eqnarray*}
Since
\begin{eqnarray*}
 \left. \phantom{\prod\!\!\!\!\!\!\!\!\!} \Tr A^\a\right|_{\s=\r}=1,\qquad
 \left. \phantom{\prod\!\!\!\!\!\!\!\!\!} \Tr \left\{A^{\a-1}B_X\right\}\right|_{\s=\r}=0,\qquad
 \left. \phantom{\prod\!\!\!\!\!\!\!\!\!}\Tr \left\{A^{\a-1}C_{YZ}\right\}\right|_{\s=\r}=0,
\end{eqnarray*}
as in the proof of (\ref{eqn:fAlpha1}), and
\begin{eqnarray*}
\left. \phantom{\prod\!\!\!\!\!\!\!\!\!} \Tr\!\{A^{\a-1} (XC_{YZ})\}\right|_{\s=\r} = \Tr (XYZ\r)=XYZ \Tr\r=0,
\end{eqnarray*}
we have
\begin{eqnarray}
&&\left. \phantom{\prod\!\!\!\!\!\!\!\!\!}D_\alpha((XYZ)_\r \|\s) \right|_{\s=\r} \nonumber \\
&&\quad=
 \frac{1}{\a -1}\left.\phantom{\prod}\!\!\!\!\!\!\!
\Tr \!\left \{(XYA^{\a-1}) B_Z+(YA^{\a-1}) (XB_Z)+(XA^{\a-1}) C_{YZ} \right\}\right|_{\s=\r}.  \nonumber \\
\label{eqn:XYZD}
\end{eqnarray}

We next compute $X g^{(D_\a)}(Y,Z)$. We see from (\ref{eqn:fAlpha1}) that
\begin{eqnarray*}
g_\r^{(D_\a)}(Y,Z)
&\!\!=\!\!&
\left. \frac{1}{\a -1} \Tr \!\left\{(YA^{\a-1}) B_Z\right\} \right|_{\s=\r}\\
&\!\!=\!\!&
 \frac{1}{\a -1}
\Tr\left \{ (\r^{\frac{1-\a}{2\a}}(Z\r)\r ^ {\frac{1-\a}{2\a}})  (YA^{\a-1})\left.\phantom{\prod}\!\!\!\!\!\!\!\!\right|_{\s=\r}\right\}\\
&\!\!=\!\!&
 \frac{1}{\a -1}
\Tr\left\{ (Z\r) \left[\r^{\frac{1-\a}{2\a}} \left.\phantom{\prod}\!\!\!\!\!\!\!\! (YA^{\a-1}) \right|_{\s=\r} \r^{\frac{1-\a}{2\a}} \right]
\right\}.
\end{eqnarray*}
Using this identity, we have
\begin{eqnarray}
X g^{(D_\a)}(Y,Z)
&\!\!=\!\!&
\frac{1}{\a-1}
\Tr\left\{(XZ\r) \left[\r^{\frac{1-\a}{2\a}} \left.\phantom{\prod}\!\!\!\!\!\!\!\!(YA^{\a-1})\right|_{\s=\r} \r^{\frac{1-\a}{2\a}} \right]
 \right\} \nonumber \\
&&
+
\frac{1}{\a-1}
\Tr \left\{(Z\r)\, X\left[\r^{\frac{1-\a}{2\a}} \left.\phantom{\prod}\!\!\!\!\!\!\!\!(YA^{\a-1})\right|_{\s=\r}\r^{\frac{1-\a}{2\a}}\right]
 \right\} \nonumber \\
&\!\!=\!\!&
\frac{1}{\a-1}
\Tr\left\{\r^{\frac{1-\a}{2\a}} (XZ\r) \r^{\frac{1-\a}{2\a}} \left.\phantom{\prod}\!\!\!\!\!\!\!\!(YA^{\a-1})\right|_{\s=\r} 
 \right\} \nonumber \\
&&
+
\frac{1}{\a-1}
\Tr \left\{(Z\r)\, X\left[\r^{\frac{1-\a}{2\a}} \left.\phantom{\prod}\!\!\!\!\!\!\!\!(YA^{\a-1})\right|_{\s=\r}\r^{\frac{1-\a}{2\a}}\right]
 \right\} \nonumber \\
&\!\!=\!\!&
\frac{1}{\a-1}
 \left.\phantom{\prod}\!\!\!\!\!\!\!\! \Tr\left\{(X B_Z) (YA^{\a-1}) \right\} \right|_{\s=\r} \nonumber \\
&&
+
\frac{1}{\a-1}
\Tr \left\{(Z\r)\, X\left[\r^{\frac{1-\a}{2\a}} \left.\phantom{\prod}\!\!\!\!\!\!\!\!(YA^{\a-1})\right|_{\s=\r}\r^{\frac{1-\a}{2\a}}\right]
 \right\}. 
 \label{eqn:Xg}
\end{eqnarray}
Substituting (\ref{eqn:XYZD}) and (\ref{eqn:Xg}) into (\ref{eqn:nablaApp}) and (\ref{eqn:nablaStarApp}), we obtain the formulae (\ref{eqn:nabla}) and (\ref{eqn:nablaStar}).


\end{document}